\theoremstyle{definition}
\newtheorem{defn}{\protect\definitionname}
\theoremstyle{plain}
\newtheorem{prop}{\protect\propositionname}
\theoremstyle{remark}
\newtheorem{rem}{\protect\remarkname}
\theoremstyle{definition}
 \newtheorem{example}{\protect\examplename}
\theoremstyle{plain}
\newtheorem{cor}{\protect\corollaryname}
\theoremstyle{plain}
\newtheorem{lem}{\protect\lemmaname}
\theoremstyle{plain}
\providecommand{\corollaryname}{Corollary}
\providecommand{\definitionname}{Definition}
\providecommand{\examplename}{Example}
\providecommand{\lemmaname}{Lemma}
\providecommand{\propositionname}{Proposition}
\providecommand{\remarkname}{Remark}
\begin{document}
\title{The Use of Symmetry for Models with Variable-size Variables}
\author{Takeshi Fukasawa\thanks{Waseda Institute for Advanced Study, Waseda University. 1-21-1, Nishiwaseda, Shinjuku, Tokyo, Japan. E-mail: fukasawa3431@gmail.com.\protect \\
This study is supported \foreignlanguage{american}{JSPS KAKENHI Grant Number JP24K22629.}}}
\maketitle
\begin{abstract}
This paper presents a universal representation of symmetric (permutation-invariant) functions with multidimensional variable-size variables. These representations help justify approximation methods that aggregate information from each variable using moments. It further discusses how these findings provide insights into game-theoretic applications, including two-step policy function estimation, Moment-based Markov Equilibrium (MME), and aggregative games.

Regarding policy function estimation, under certain conditions, estimating a common policy function as a function of a firm\textquoteright s own state and the sum of polynomial terms (moments) of competitors\textquoteright{} states is justified, regardless of the number of firms in a market, provided a sufficient number of moments are included. For MME, this study demonstrates that MME is equivalent to Markov Perfect Equilibrium if the number of moments reaches a certain level and regularity conditions are satisfied. 

Regarding aggregative games, the paper establishes that any game satisfying symmetry and continuity conditions in payoff functions can be represented as a multidimensional generalized aggregative game. This extends previous research on generalized (fully) aggregative games by introducing multidimensional aggregates.

{\flushleft{{\bf Keywords:}  Symmetry; Permutation invariance; Variable-size functions, Approximation methods}}
\end{abstract}
\pagebreak{}

\section{Introduction}

Many economic problems require approximation methods due to their complexity and computational burden. Examples include two-step policy function estimation methods for dynamic models (\citealp{hotz1993conditional,bajari2007estimating}),\footnote{In two-step estimation methods, policy functions used in the second step are ``approximated'' using those consistently estimated in reduced form in the first step. } and Moment-based Markov Equilibrium (MME; \citealp{ifrach2017framework}) as an approximation of Markov Perfect Equilibrium (MPE). In these methods, accurately approximating functions (e.g., value functions, policy functions) is crucial for drawing valid conclusions.

One challenge in function approximation arises from variable-size variables. Consider a dynamic investment competition model as in \citet{Ericson1995}, where the capital stocks of all firms at the beginning of each period serve as state variables. Policy function corresponds to the density of the value of each firm's investment as a function of states. How, then, should we estimate the policy function using data from multiple markets in a reduced form, given that some markets have three firms while others have only two? Here, the states are variable-size, because the size of states varies across markets.

This study presents universal representations of multidimensional symmetric (permutation invariant) functions with variable-size variables using polynomial functions (Section \ref{sec:Universal-representation}), drawing on prior research in machine learning and mathematics, including \citet{zaheer2017deep}. Symmetry appears in various economic applications. For instance, in models of firms' dynamic oligopolistic investment competition, it is frequently assumed that the order of each competing firm's states does not affect a firm's policy function. Similarly, in a model of product differentiation, the order of pairs of each competing firm's product price and product characteristics often does not influence a firm's profit function in many demand models. This study discusses how these findings provide insights into game-theoretic applications, including two-step policy function estimation, MME, and aggregative games\footnote{The latter two primarily consider settings with fixed-size variables.} (Section \ref{sec:Economic-Applications}). The results presented in the current study help assess the validity of approximation methods that aggregate information from each variable using moments. 

For the policy function estimation example, under certain conditions, estimating a common policy function as a function of a firm\textquoteright s own state and the sum of polynomial terms (moments) of competitors\textquoteright{} states is justified, regardless of the number of firms in a market, as long as a sufficient number of moments are included. Concerning the MME, this study demonstrates that MME is equivalent to the MPE if the number of moments reaches a certain threshold and regularity conditions are satisfied. Regarding aggregative games, we can show that any game can be represented as multidimensional generalized aggregative, introducing multidimensional aggregates into the generalized (fully) aggregative games (\citealp{cornes2012fully}). 

The rest of this paper is organized as follows. Section \ref{sec:Literature} describes the relationship to previous studies. Section \ref{sec:Universal-representation} presents the mathematical results. Section \ref{sec:Economic-Applications} discusses the three game-theoretic applications mentioned above. In this section, we briefly review the significance and then examine how our mathematical results from the previous section provide insights into them. Finally, Section \ref{sec:Conclusions} presents the conclusions.

In Appendix \ref{sec:Additional-results-symmetry}, we further discuss the implications for economic models. Appendix \ref{subsec:Merger} examines merger analyses. When a merger occurs in a market, the number of firms changes, and this can be addressed by the framework of variable-size variables. Appendices \ref{subsec:Models-with-dynamic-demand} and \ref{subsec:Models-with-Multi-product} explore models with dynamic demand and multi-product firms. Appendix \ref{sec:Proof-symmetry} presents all the proofs of the main propositions. 

\section{Literature\label{sec:Literature}}

First, this study relates to and contributes to economic research based on game-theoretic models. It is sometimes not easy to directly estimate or solve the models, and some approximation techniques in a broad sense are introduced (e.g., two-step policy function estimation method, MME). By showing general mathematical results based on the concept of symmetry, this study evaluates the justification of these methods from a different perspective. The mathematical results are broad and straightforward, making them useful for further research.

Note that the idea of using symmetry is not new in the literature. For instance, \citet{Pakes_McGuire1994} discussed using symmetry to reduce the dimension of state variables in a dynamic game with finite state space. However, their approach is applicable only to models with a fixed-size finite state space. In contrast, the mathematical results in the current paper can be also applied to dynamic models with both fixed-size/variable-size and discrete/continuous state space, and they are more general. 

\citet{kahou2021exploiting} also discussed using symmetric structures with fixed-size variables, mainly for quantitatively solving heterogeneous agent macroeconomic models using deep learning techniques.\footnote{\citet{han2025deepham} also developed a method for solving heterogeneous agent models. They utilized a flexible and interpretable representation of the agent distribution through generalized moments, which extracts key information of the distribution. To justify their strategy, they relied on the universal approximation of fixed-size permutation invariant functions demonstrated by \citet{han2022universal}. } The current study complements their analysis by formally showing the representations allowing for the case of variable-size variables. Note that they speculated in Section 6.3 of their paper that exploiting the symmetric structure enables researchers to solve dynamic models with networks. Typically, network structure is characterized by variable-size variables, because the number of each agent's neighbors is heterogeneous.

Finally, this study builds on the machine learning and mathematics literature on symmetric functions. Recent machine learning studies have focused on leveraging symmetric structures for tasks such as regression and classification. \citet{zaheer2017deep} developed a method called ``deep sets'',\footnote{Symmetric functions can be thought of as a function on sets.} utilizing symmetric structure and deep learning techniques, and showed good performance in some machine learning applications. To formally justify the strategy, \citet{zaheer2017deep} showed a universal representation of symmetric functions with single-dimensional fixed-size variables. \citet{chen2024representation} extended the result to the setting with multi-dimensional fixed-size variables. The current study extends the mathematical results to the setting with multi-dimensional variable-size variables, and derives a result more relevant to game-theoretic applications.

As discussed in Remark \ref{rem:wagstaff_2022} of Section \ref{sec:Universal-representation} in detail, \citet{wagstaff2022universal} derived a representation of symmetric functions with single-dimensional variable-size variables. While they demonstrated the existence of a function that aggregates information from variable-size variables, the exact form of the function remains unclear. In contrast, this study employs polynomial functions to aggregate information from variable-size variables, resulting in a simpler and more intuitive representation.

\section{Universal representation of symmetric functions with variable-size variables\label{sec:Universal-representation}}

In the following discussions, let $I,J,J^{\prime}$ be integers, and $\mathcal{S}_{J^{\prime}}$ be the set of all permutations of $J^{\prime}$ elements. Define $\Omega^{J^{\prime}}\subset\mathbb{R}^{I\times J^{\prime}}$ as a subset of $\mathbb{R}^{I\times J^{\prime}}$. Let $V:\Omega^{J^{\prime}}\subset\mathbb{R}^{I\times J^{\prime}}\rightarrow\mathbb{R}$ be a function expressed in the form $V\left(x_{1},\cdots,x_{J^{\prime}}\right)$ where $x_{j}\ \left(j=1,\cdots,J^{\prime}\right)$ is a $I$-dimensional variable. The restriction of \ensuremath{V} to a subset $\Omega^{J^{\prime}}$ is denoted by $V|_{\Omega^{J^{\prime}}}$. Finally, let $\mathcal{J}$ be a set; for instance, $\mathcal{J}$ can be $\{1,\cdots,J^{\prime}\}$. 

In economic applications, $\mathcal{J}$ can can represent the set of firms, products, or agents. $V\left(\left\{ x_{j}\right\} _{j\in\mathcal{J}}\right)$ may describe policy functions or profit functions, where $x_{j}$ denotes agent $j$'s state. The capital stock of firm $j$, or the pair of firm $j$'s product price and quality, can be onsidered its states. 

First, we define permutation invariance, a particular form of symmetry.
\begin{defn}
A function $V:\Omega^{J^{\prime}}\subset\mathbb{R}^{I\times J^{\prime}}\rightarrow\mathbb{R}$ is permutation invariant,\footnote{In some literature in industrial organization (e.g., \citealp{doraszelski2007framework}), the property is called anonymity or exchangeability. They discuss the use of symmetry in dynamic models with discrete states. Our discussion is more general in that we allow for the case with continuous states.} if, for all permutations $\Pi\in\mathcal{S}_{J^{\prime}}$, $V\left(\Pi\left(x_{1},\cdots,x_{J^{\prime}}\right)\right)=V\left(x_{1},\cdots,x_{J^{\prime}}\right)$. A function $V:\cup_{J^{\prime}\leq J}\Omega^{J^{\prime}}\subset\mathbb{R}^{I\times(\leq J)}\rightarrow\mathbb{R}$ is permutation invariant if $V|_{\Omega^{J^{\prime}}}$ is permutation invariant for every $J^{\prime}\leq J$.
\end{defn}
For instance, consider a fixed-size function $f:\mathbb{R}^{3}\rightarrow\mathbb{R}$ such that $f(x_{1},x_{2},x_{3})=x_{1}x_{2}+x_{1}x_{3}+x_{2}x_{3}$. Similarly, consider a variable-size function $g:\left(\mathbb{R}^{2}\cup\mathbb{R}^{3}\right)\rightarrow\mathbb{R}$, defined as $g(x_{1},x_{2})=x_{1}x_{2}$, $g(x_{1},x_{2},x_{3})=x_{1}x_{2}+x_{1}x_{3}+x_{2}x_{3}$. Both functions are permutation invariant. In the rest of this paper, let $V\left(\left\{ x_{j}\right\} _{j\in\mathcal{J}},y\right)$ be a permutation invariant function whose output remains unchanged regardless of the order of the elements $x_{j}(j\in\mathcal{J})$ , but not necessarily invariant to the order of $\left\{ x_{j}\right\} _{j\in\mathcal{J}}$ and $y$. Furthermore, let $\left(x_{1},\cdots,x_{J}\right)$ be a tuple with $J$ elements, where the order of elements matters.

The following definition concerns the continuity of functions with variable-size variables.
\begin{defn}
A function $V:\cup_{J^{\prime}\leq J}\Omega^{J^{\prime}}\subset\mathbb{R}^{I\times(\leq J)}\rightarrow\mathbb{R}$ is continuous if $V|_{\Omega^{J^{\prime}}}$ is continuous for every $J^{\prime}\leq J$.
\end{defn}
Next, we define the function called (multi)symmetric power sum:
\begin{defn}
The function $\eta_{I,J}:[0,\infty)^{I}\rightarrow\mathbb{R}^{\kappa(I,J)}\ \left(\kappa(I,J)\equiv\left(\begin{array}{c}
J+I\\
I
\end{array}\right)-1\right)$ , defined by

\[
\eta_{I,J}^{(s)}(x)=x_{i=1}^{s_{1}}x_{i=2}^{s_{2}}\cdots x_{i=I}^{s_{I}}\ \ \ s\in\left\{ (s_{1},s_{2},\cdots,s_{I})\in\mathbb{Z}_{+}^{I}|1\leq s_{1}+s_{2}+\cdots+s_{I}\leq J\right\} 
\]

is called multisymmetric power sum\footnote{In this definition, we exclude the term such that $s_{i}=0\ \forall i$.} of degree up to $J$ in $I$ variables.
\end{defn}
In the special case $I=1$, this reduces to standard polynomial terms $x,x^{2},\cdots,x^{J}$.

In the following, let $\min\Omega\equiv\left\{ x\in\Omega|x\leq x^{\prime}\ \forall x^{\prime}\in\Omega\right\} $ and $\Omega^{\leq J}\equiv\cup_{J^{\prime}\leq J}\Omega^{J^{\prime}}$. Then, we obtain the following statement:\footnote{It is trivial that functions in the form of $\psi\left(\sum_{k\in\mathcal{J}}\eta_{I,J}(x_{k})\right)$ are permutation invariant. In contrast, whether permutation invariant functions can be represented as $\psi\left(\sum_{k\in\mathcal{J}}\eta_{I,J}(x_{k})\right)$ is not necessarily trivial. }
\begin{prop}
\label{prop:variable_size}Let $\Omega$ be a compact subset of $[0,\infty)^{I}$. Suppose that one of the following conditions holds for a continuous permutation invariant function $V:\Omega^{\leq J}\rightarrow\mathbb{R}$:

(a). $\min\Omega>0_{I}$

(b). $V\left(x_{1},\cdots,x_{J^{\prime}},\underbrace{0,\cdots,0}_{J-J^{\prime}}\right)=V\left(x_{1},\cdots,x_{J^{\prime}}\right)$

Then, there exists a continuous function $\psi:\mathbb{R}^{\kappa(I,J)}\rightarrow\mathbb{R}$ such that:

\[
V\left(\{x_{j}\}_{j\in\mathcal{J}}\right)=\psi\left(\sum_{k\in\mathcal{J}}\eta_{I,J}(x_{k})\right)\ |\mathcal{J}|\leq J.
\]
\end{prop}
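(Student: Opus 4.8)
The plan is to show that the multisymmetric power sums, after summation over $\mathcal{J}$, separate points in the relevant quotient space, and then invoke a Stone--Weierstrass / factorization argument to produce the continuous outer function $\psi$. First I would introduce the summed map $E_{I,J}(\{x_j\}_{j\in\mathcal{J}}) \equiv \sum_{k\in\mathcal{J}}\eta_{I,J}(x_k) \in \mathbb{R}^{\kappa(I,J)}$, which is manifestly permutation invariant and, under condition (b), also well-defined across different cardinalities $|\mathcal{J}|\leq J$ because padding with zeros contributes nothing to the power sums (since $\eta_{I,J}(0_I)=0$). The crux of the construction is that any continuous permutation-invariant $V$ factors through $E_{I,J}$: that is, $V=\psi\circ E_{I,J}$ for some map $\psi$ on the image $E_{I,J}(\Omega^{\leq J})$, and this factorization is continuous. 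The key algebraic input is the classical fact that the multisymmetric power sums of degree up to $J$ generate the ring of multisymmetric polynomials on $J$ vectors in $\mathbb{R}^I$, equivalently that the map $E_{I,J}$ separates multisets of size at most $J$; this is precisely what distinguishes our explicit polynomial aggregator from the abstract existence result of \citet{wagstaff2022universal} noted in Remark~\ref{rem:wagstaff_2022}.

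The main steps, in order, are as follows. First I would establish \emph{injectivity up to permutation}: if $\{x_j\}$ and $\{x_j'\}$ are two multisets in $\Omega$ of size at most $J$ with $E_{I,J}(\{x_j\})=E_{I,J}(\{x_j'\})$, then they are equal as multisets (after zero-padding to common size $J$, which is legitimate under (b), or directly under (a) since all coordinates are bounded away from $0$). This follows because agreement of all power sums $p_s = \sum_k x_{k,1}^{s_1}\cdots x_{k,I}^{s_I}$ for $1\leq |s|\leq J$ determines the underlying multiset of $J$ points in $\mathbb{R}^I$, by the fundamental theorem on multisymmetric functions (the power sums up to degree $J$ generate the algebra, hence separate orbits of $\mathcal{S}_J$). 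Consequently the set-valued map $E_{I,J}$ descends to a \emph{bijection} from the quotient $\Omega^{\leq J}/\mathcal{S}$ (multisets) onto its image $K \equiv E_{I,J}(\Omega^{\leq J}) \subset \mathbb{R}^{\kappa(I,J)}$.

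Second I would define $\psi$ on $K$ by $\psi(v) \equiv V(E_{I,J}^{-1}(v))$, well-defined precisely because of the injectivity just shown together with the permutation invariance of $V$ (different representatives of the same multiset give the same $V$-value). Third, continuity: I would argue that $\psi$ is continuous on the compact set $K$. The cleanest route is topological --- $E_{I,J}$ is continuous, $\Omega^{\leq J}$ is compact (finite union of compacts $\Omega^{J'}$ with $\Omega$ compact), so $K$ is compact; the induced map on the quotient is a continuous bijection from a compact space to a Hausdorff space, hence a homeomorphism, and $\psi$ is then the composite of this inverse homeomorphism with the continuous $V$. Finally I would extend $\psi$ from the compact set $K$ to all of $\mathbb{R}^{\kappa(I,J)}$ while preserving continuity, e.g.\ by the Tietze extension theorem, so that the stated form $V(\{x_j\})=\psi(\sum_k \eta_{I,J}(x_k))$ holds with $\psi$ continuous on the whole Euclidean space.

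The step I expect to be the main obstacle is the continuity of $\psi$, and in particular handling the \emph{variable size} carefully. The subtlety is that $\Omega^{\leq J}$ is a disjoint union of strata of different dimensions, and the fibers of $E_{I,J}$ over a point $v\in K$ may consist of representatives living in strata of different cardinalities that nevertheless correspond to the same zero-padded multiset; one must verify that $V$ takes a common value across all such representatives, which is exactly where condition (b) (or the strict positivity in (a), which forbids spurious zero-coordinate collisions) does the work. I would treat this by reducing everything to the top stratum $\Omega^J$ via the zero-padding identification, checking that under (a) no padding is needed because $\min\Omega>0_I$ keeps the strata genuinely separated in $K$, whereas under (b) the padding map is $V$-compatible by hypothesis; in both cases the quotient $\Omega^{\leq J}/\mathcal{S}$ embeds homeomorphically into the multisets of exactly $J$ points, and the compact-to-Hausdorff homeomorphism argument then closes the continuity gap.
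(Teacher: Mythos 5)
Your proposal is correct and shares the paper's skeleton --- invert the summed moment map on multisets, compose $V$ with the inverse, get continuity from compactness, and extend by Tietze --- but it routes around the paper's proof in one substantive place, so a comparison is worth making. The paper handles condition (a) by first \emph{extending} $V$: Lemma \ref{lem:function_extension} builds an explicit multilinear interpolation $\overline{V}$ on matrices that may contain zero columns, with weights $a(x)$ and $b(x)$, so that zero-padding becomes legitimate and the whole problem reduces to the fixed-size statement of Corollary \ref{cor:chen_et_al} (which packages both the existence and the continuity of the inverse map $\Lambda$, via \citealp{chen2024representation}). You never extend $V$; instead you observe that under (a) the images $E_{I,J}(\Omega^{J_1})$ and $E_{I,J}(\Omega^{J_2})$ are disjoint for $J_1\neq J_2$ (the zero-padded representatives contain different numbers of zero columns, and $0_I\notin\Omega$ prevents collisions), so the summed moment map is already injective on the variable-size quotient and can be inverted directly on its image, with continuity of the inverse supplied by the compact-to-Hausdorff homeomorphism argument rather than by citation. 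This is a genuine simplification for Proposition \ref{prop:variable_size} under (a) --- it dispenses with Lemma \ref{lem:function_extension} entirely --- though the paper's extension lemma is constructive and is reused for Propositions \ref{prop:no_nest_case} and \ref{prop:nest_case}, so it is not wasted effort there. Your key algebraic input (multisymmetric power sums of degree up to $J$ separate $\mathcal{S}_J$-orbits) is exactly the content of Proposition \ref{prop:chen_et_al}, so the two proofs rest on the same foundation.

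One point to tighten: your second paragraph asserts that $E_{I,J}$ descends to a \emph{bijection} on $\Omega^{\leq J}/\mathcal{S}$ in both cases, but under condition (b) with $0_I\in\Omega$ this is false --- a multiset containing a genuine zero element and the multiset with that element deleted are distinct points of the quotient with identical moment vectors. You do acknowledge this in your final paragraph, but the fix should be stated as part of the construction, not as an afterthought: pass to the further quotient that identifies configurations with the same zero-padding to size $J$, check that $V$ descends to it (this is precisely what (b) guarantees, iterated and combined with permutation invariance), verify that this quotient map is closed (it is, being a continuous surjection from a compact space to a Hausdorff one, stratum by stratum), and only then run the compact-to-Hausdorff argument. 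With that repair the argument is complete.
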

Condition (2) intuitively states that additional variables beyond $J^{\prime}$ have no impact on $V$ when they are zero, and they have no effect on the values of ``aggregated variables'' $\sum_{k\in\mathcal{J}}\eta_{I,J}(x_{k})$, and consequently the values of $V$.

Although restricting $\Omega$ to $[0,\infty)^{I}$ might seem limiting, an appropriate change of variables can extend the applicability of this result.\footnote{For instance, for a state $x\in[-M,M]\ (M>0)$, we can define an alternative state $\widetilde{x}\equiv\exp(x)\in[\exp(-M),\exp(M)]$, which always takes a positive value.}
\begin{rem}
\label{rem:wagstaff_2022}\citet{wagstaff2022universal} showed there exist continuous functions $\phi:[0,1]\rightarrow Im(\phi)\subset\mathbb{R}^{J}$ and $\Psi:\left\{ \sum_{k\in\mathcal{J}}\phi(x_{k});x_{k}\in[0,1],|\mathcal{J}|\leq J\right\} \rightarrow\mathbb{R}$ such that $V\left(\{x_{j}\}_{j\in\mathcal{J}}\right)=\Psi\left(\sum_{k\in\mathcal{J}}\phi(x_{k})\right)\ |\mathcal{J}|\leq J$ for a continuous permutation invariant function $V:[0,1]^{\leq J}\rightarrow\mathbb{R}$. Though we would be able to extend the proof to the case with multidimensional variables, the explicit form of $\phi$ is not clear, unlike the case of $\eta$ in Proposition \ref{prop:variable_size}. By using polynomial functions for aggregation, our result provides a more interpretable representation. In addition, by the construction of $\Psi$ and $\phi$, the subset of the domain of $\Psi$, $\left\{ \sum_{k\in\mathcal{J}}\phi(x_{k});x_{k}\in[0,1]\right\} \subset\mathbb{R}^{J}$, should be disjoint for different values of $|\mathcal{J}|$. It implies large domains of $\Psi$, which would be hard to approximate in practice. 

Note that we additionally impose condition (a) or (b), which was not required in \citet{wagstaff2022universal}. However, condition (a) can be easily satisfied by a change of variables and is not particularly restrictive.
\end{rem}
\begin{rem}
We can easily show that $\psi|_{\left\{ \sum_{k\in\mathcal{J}}\eta_{I,J}(x_{k})|x_{k}\in\Omega\right\} }:\left\{ \sum_{k\in\mathcal{J}}\eta_{I,J}(x_{k})|x_{k}\in\Omega\right\} \subset\mathbb{R}^{\kappa(I,J)}\rightarrow\mathbb{R}$ is unique. If $\psi_{1}$ and $\psi_{2}$ satisfy the conditions in Proposition \ref{prop:variable_size} and $V\left(\{x_{j}\}_{j\in\mathcal{J}}\right)=\psi_{1}\left(\sum_{k\in\mathcal{J}}\eta_{I,J}(x_{k})\right)=\psi_{2}\left(\sum_{k\in\mathcal{J}}\eta_{I,J}(x_{k})\right)$ hold for all $x_{k}\in\Omega,k\in\mathcal{J}$, $\psi_{1}$ and $\psi_{2}$ take the same values in the domain $\left\{ \sum_{k\in\mathcal{J}}\eta_{I,J}(x_{k})|x_{k}\in\Omega\right\} $. Hence, the uniqueness of $\psi$ is guaranteed when we restrict the domain to $\left\{ \sum_{k\in\mathcal{J}}\eta_{I,J}(x_{k})|x_{k}\in\Omega\right\} $. Similar discussions also hold for the subsequent statements (Propositions \ref{prop:no_nest_case} and \ref{prop:nest_case}).
\end{rem}
The following simple example shows how Proposition \ref{prop:variable_size} relates to models in economics.
\begin{example}
(Discrete choice model of differentiated products)

Let $\mathcal{J}$ be the set of differentiated products available in a market with $|\mathcal{J}|\leq J$. Define $\log v_{j}$ as the quality of the product produced by firm $j$, where $v_{j}$ is strictly positive for all $j\in\mathcal{J}$. We consider a model where each consumer chooses at most one product that maximizes their utility. Consumer $i$'s utility from purchasing product $j$ is given by $u_{ij}(\epsilon_{ij})\equiv\log v_{j}+\epsilon_{ij}$, while the utility of purchasing nothing is given by $u_{i0}(\epsilon_{i0})\equiv\epsilon_{i0}$. $\epsilon_{ij}$ and $\epsilon_{i0}$ denote idiosyncratic utility shocks. The fraction of consumers purchasing product $j$ is $Pr\left(u_{ij}(\epsilon_{ij})>u_{ik}(\epsilon_{ik})\ \forall k\in\mathcal{J}\cup\{0\}-\{j\}\right)$ where $Pr\left(\cdot\right)$ denotes the probability. If the marginal utility of income is $\alpha$, the expected consumer surplus can be expressed as $W\left(\left\{ v_{j}\right\} _{j\in\mathcal{J}}\right)\equiv\frac{1}{\alpha}E_{\epsilon}\left[\max\left\{ u_{ik}(\epsilon_{ik})\right\} {}_{k\in\mathcal{J}\cup\{0\}}\right]=\frac{1}{\alpha}E_{\epsilon}\left[\max\left\{ \left\{ \log v_{j}+\epsilon_{ij}\right\} _{j\in\mathcal{J}},\epsilon_{i0}\right\} \right]$, where $E_{\epsilon}$ represents the expectation operator over all possible values of $\epsilon$.

By Proposition \ref{prop:variable_size} , if if $W\left(\left\{ v_{j}\right\} _{j\in\mathcal{J}}\right)$ is permutation invariant with respect to $\left\{ v_{j}\right\} _{j\in\mathcal{J}}$, $W\left(\left\{ v_{j}\right\} _{j\in\mathcal{J}}\right)$ can be represented as $W\left(\left\{ v_{j}\right\} _{j\in\mathcal{J}}\right)=\exists\psi\left(\sum_{k\in\mathcal{J}}\eta_{I=1,J}\left(v_{k}\right)\right)=\psi\left(\left\{ \sum_{k\in\mathcal{J}}v_{k}^{m}\right\} _{m=1,\cdots,J}\right)$ . In the setting $\epsilon$ follows a Gumbel distribution, it is known that $W\left(\left\{ v_{j}\right\} _{j\in\mathcal{J}}\right)=\frac{1}{\alpha}\left[\log\left(1+\sum_{k\in\mathcal{J}}\exp\left(\log v_{k}\right)\right)+\gamma\right]=\frac{1}{\alpha}\left[\log\left(1+\sum_{k\in\mathcal{J}}v_{k}\right)+\gamma\right]$ holds, where $\gamma$ denotes the Euler's constant.\footnote{See \citet{train2009discrete} and others.} Thus, when $\epsilon$ follows Gumbel distribution, the expected consumer surplus can be represented using only the first-order unnormalized moment $\sum_{k\in\mathcal{J}}v_{k}$ to represent $W\left(\left\{ v_{j}\right\} _{j\in\mathcal{J}}\right)$ . However, for distributions other than Gumbel, there is no guarantee that a first-order moment is sufficient. Proposition \ref{prop:variable_size} ensures that $W\left(\left\{ v_{j}\right\} _{j\in\mathcal{J}}\right)$ can always be represented using the sum of polynomials up to degree $J$. 
\end{example}
\medskip{}

The next result is more relevant to game-theoretic applications.\footnote{In the context of demand estimation, \citet{singh2025choice} derived a different representation $V\left(x_{j},\left\{ x_{j^{\prime}}\right\} _{j^{\prime}\in\mathcal{J}-\{j\}}\right)=\exists\rho\left(\exists\phi_{1}\left(x_{j}\right)+\sum_{k\in\mathcal{J}-\{j\}}\exists\phi_{2}\left(x_{k}\right)\right)$. However, the domain of $x_{j}$ is restricted to some countable universe, and the explicit forms of $\phi_{1}$ and $\phi_{2}$ are not necessarily clear.}
\begin{prop}
\label{prop:no_nest_case}

Let $\Omega$ be a compact subset of $[0,\infty)^{I}$, and let $\Upsilon$ be a subset of $\mathbb{R}^{C}$. Suppose either of the following conditions holds for a continuous permutation invariant function $V:\Omega\times\Omega^{\leq J-1}\times\Upsilon\rightarrow\mathbb{R}$:

(a). $\min\Omega>0_{I}$

(b). $V\left(x_{j},\left\{ x_{j^{\prime}}\right\} _{j^{\prime}\in\mathcal{J}-\{j\}},y\right)=V\left(x_{j},\left\{ x_{j^{\prime}}\right\} _{j^{\prime}\in\mathcal{J}-\{j\}\ s.t.\ x_{j^{\prime}}\neq0},y\right).$

Then, there exists a continuous function $\psi:\Omega\times\mathbb{R}^{\kappa(I,J-1)}\times\Upsilon\mathbb{\rightarrow\mathbb{R}}$ such that:

\[
V\left(x_{j},\left\{ x_{j^{\prime}}\right\} {}_{j^{\prime}\in\mathcal{J}-\{j\}},y\right)=\psi\left(x_{j},\sum_{k\in\mathcal{J}-\{j\}}\eta_{I,J-1}(x_{k}),y\right)\ |\mathcal{J}|\leq J.
\]
\end{prop}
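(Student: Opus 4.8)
The plan is to regard the firm's own state $x_{j}$ and the covariate $y$ as passive coordinates carried along unchanged, and to run the argument of Proposition \ref{prop:variable_size} on the remaining competitor block, now summing $\eta_{I,J-1}$ over at most $J-1$ arguments. Concretely, define the aggregation map
\[
\bar{\sigma}\bigl(x_{j},\{x_{j'}\}_{j'\in\mathcal{J}-\{j\}},y\bigr)=\Bigl(x_{j},\ \textstyle\sum_{k\in\mathcal{J}-\{j\}}\eta_{I,J-1}(x_{k}),\ y\Bigr)\in\Omega\times\mathbb{R}^{\kappa(I,J-1)}\times\Upsilon,
\]
whose image is exactly the product set $\Omega\times\mathcal{A}\times\Upsilon$, where $\mathcal{A}\equiv\bigl\{\sum_{k\in\mathcal{K}}\eta_{I,J-1}(x_{k}):x_{k}\in\Omega,\ |\mathcal{K}|\leq J-1\bigr\}$, since $x_{j}$, the competitor multiset, and $y$ vary independently. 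The first step is to show that $V$ factors through $\bar{\sigma}$: if $\bar{\sigma}(p)=\bar{\sigma}(q)$, the first and third coordinates force equal $x_{j}$ and $y$, while equality of the middle coordinate, together with the separation property established in the proof of Proposition \ref{prop:variable_size} (applied with $J$ replaced by $J-1$), forces the two competitor multisets to coincide up to permutation and, under (b), up to deletion of zero components. Permutation invariance and condition (a) or (b) then give $V(p)=V(q)$. Hence $\psi_{0}\equiv V\circ\bar{\sigma}^{-1}$ is a well-defined function on $\Omega\times\mathcal{A}\times\Upsilon$ with $V=\psi_{0}\circ\bar{\sigma}$, satisfying $\psi_{0}(x_{j},z,y)=V(x_{j},m,y)$ for any competitor multiset $m$ with $\sum_{k}\eta_{I,J-1}(x_{k})=z$.

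The second, and main, step is joint continuity of $\psi_{0}$. A black-box use of Proposition \ref{prop:variable_size} is insufficient here: for each fixed $(x_{j},y)$ it supplies a continuous selector, but those selectors are pinned down only on $\mathcal{A}$, and their Tietze extensions off $\mathcal{A}$ need not vary continuously in $(x_{j},y)$. Instead I argue continuity directly. Let $Q$ denote the quotient of the compact domain $\Omega^{\leq J-1}$ by permutations and, under (b), by deletion of zero vectors; since $\eta_{I,J-1}(0_{I})=0$, the summed map descends to a continuous injection $Q\to\mathcal{A}$, which is a continuous bijection from a compact space onto a Hausdorff space and hence a homeomorphism. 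Consequently the inverse map $\mathcal{A}\to Q$ recovering the competitor multiset from its power sums is continuous. A standard subsequence argument then yields continuity of $\psi_{0}$ on $\Omega\times\mathcal{A}\times\Upsilon$: given a convergent sequence in the domain, lift the convergent middle coordinates to competitor multisets, use compactness of $Q$ to extract a subsequence along which representatives converge to a representative of the limit, and invoke continuity of $V$ together with its invariance on the identified classes.

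Finally, $\Omega$ and $\mathcal{A}$ are compact, so $\Omega\times\mathcal{A}\times\Upsilon$ is a closed subset of the metrizable (hence normal) space $\Omega\times\mathbb{R}^{\kappa(I,J-1)}\times\Upsilon$; the Tietze extension theorem extends $\psi_{0}$ to a continuous $\psi:\Omega\times\mathbb{R}^{\kappa(I,J-1)}\times\Upsilon\rightarrow\mathbb{R}$ satisfying the claimed identity. The single genuine obstacle is the joint-continuity step: it is what blocks a one-line reduction to Proposition \ref{prop:variable_size}, and it is resolved by compactness of the competitor domain, which turns the power-sum aggregation into a homeomorphism onto its image and lets continuity of $V$ pass to $\psi_{0}$.
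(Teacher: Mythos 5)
Your proof is correct and follows the same skeleton as the paper's argument: pad the competitor block with zeros so that the multisymmetric power sums of degree up to $J-1$ determine the multiset of competitors' states, conclude that $V$ factors through $\left(x_{j},\sum_{k\in\mathcal{J}-\{j\}}\eta_{I,J-1}(x_{k}),y\right)$, verify that the factor is continuous, and only then extend it off the image. The differences lie in how the two key ingredients are supplied. The paper obtains continuity of the inverse of the power-sum map by invoking Corollary \ref{cor:chen_et_al} (derived from \citet{chen2024representation}), whereas you re-derive it from the fact that a continuous bijection from a compact space onto a Hausdorff subset of $\mathbb{R}^{\kappa(I,J-1)}$ is a homeomorphism --- the same mechanism, made explicit and self-contained. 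Under condition (a) the paper additionally uses Lemma \ref{lem:function_extension} to build a continuous zero-padded extension $\overline{V}$ on $\mathcal{M}(I,J-1;\widetilde{\Omega})$ so that the composition $\overline{V}\circ\widetilde{\Lambda}$ makes sense on padded configurations; you avoid that construction entirely by working with the quotient $Q$ of $\Omega^{\leq J-1}$ and noting that when $0_{I}\notin\Omega$ the images of the different-size strata are pairwise disjoint compact sets, so continuity on each piece suffices. Your observation that a slice-by-slice application of Proposition \ref{prop:variable_size} for fixed $(x_{j},y)$ would not deliver joint continuity is well taken; the paper's construction is uniform in $(x_{j},y)$ precisely because $\widetilde{\Lambda}$ does not depend on $(x_{j},y)$ and $\overline{V}$ is jointly continuous. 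Finally, you make explicit the concluding Tietze extension of $\psi_{0}$ from $\Omega\times\mathcal{A}\times\Upsilon$ to $\Omega\times\mathbb{R}^{\kappa(I,J-1)}\times\Upsilon$, a step the paper leaves implicit. Both routes are valid; yours trades the paper's explicit interpolation formula for a softer compactness argument.
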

For the special case, this simplifies to $V\left(x_{j},\left\{ x_{j^{\prime}}\right\} {}_{j^{\prime}\in\mathcal{J}-\{j\}},y\right)=\psi\left(x_{j},\left\{ \sum_{k\in\mathcal{J}-\{j\}}x_{k}^{q}\right\} _{q=1,\cdots,J-1},y\right)\ |\mathcal{J}|\leq J$, because $\eta_{I=1,J-1}(x_{k})=\left\{ x_{k}^{q}\right\} _{q=1,\cdots,J-1}$. In the next section's discussion, we call $\sum_{k\in\mathcal{J}-\{j\}}\eta_{I,J-1}(x_{k})$ (unnormalized) moments based on polynomial functions.

\medskip{}

The next example shows how Proposition \ref{prop:no_nest_case} is related to game-theoretic problems.
\begin{example}
(Model of differentiated products)

Let $\mathcal{J}$ be the set of single-product firms producing differentiated products in a market with $|\mathcal{J}|\leq J$. Difine $p_{j}$ and $\delta_{j}$ as the price and quality of the product produced by firm $j$, respectively. We assume $p_{j}$ and $\delta_{j}$ are both positive. In general, the demand for product $j$ depends on the price and quality of all products in the market. Suppose that the demand for product $j$ is given by $D_{j}\left(p_{j},\delta_{j},\left\{ p_{-j},\delta_{-j}\right\} _{-j\in\mathcal{J}-\{j\}}\right)$, where $\left\{ p_{-j},\delta_{-j}\right\} _{-j\in\mathcal{J}-\{j\}}$ is permutation invariant, meaning that the demand for product $j$ does not depend on the identity of competing products.\footnote{The assumption, which is also called exchangeability, is sometimes utilized in the literature of demand estimation (e.g., \citealp{gandhi2019measuring}, \citealp{compiani2022market}, \citealp{allen2024latent}, \citealp{singh2025choice}).} The profit function of firm $j$ is then expressed as $\pi_{j}\left(p_{j},mc_{j},\delta_{j},\left\{ p_{-j},\delta_{-j}\right\} _{-j\in\mathcal{J}-\{j\}}\right)=\left(p_{j}-mc_{j}\right)D_{j}\left(p_{j},\delta_{j},\left\{ p_{-j},\delta_{-j}\right\} _{-j\in\mathcal{J}-\{j\}}\right)$, where $mc_{j}$ represents the marginal cost of product $j$. Because $\pi_{j}$ is permutation invariant concerning $\left\{ p_{-j},\delta_{-j}\right\} _{-j\in\mathcal{J}-\{j\}}$, Proposition \ref{prop:no_nest_case} implies that $\pi_{j}$ can be represented as $\pi_{j}\left(p_{j},mc_{j},\delta_{j},\left\{ p_{-j},\delta_{-j}\right\} _{-j\in\mathcal{J}-\{j\}}\right)=\exists\psi\left(p_{j},mc_{j},\delta_{j},\sum_{k\in\mathcal{J}-\{j\}}\eta_{I=2,J-1}\left(p_{k},\delta_{k}\right)\right)$. For instance, if $J=3$, $\pi_{j}\left(p_{j},mc_{j},\delta_{j},\left\{ p_{-j},\delta_{-j}\right\} _{-j\in\mathcal{J}-\{j\}}\right)$ can be expressed as $\exists\psi\left(p_{j},mc_{j},\delta_{j},\sum_{k\in\mathcal{J}-\{j\}}p_{k},\sum_{k\in\mathcal{J}-\{j\}}\delta_{k},\sum_{k\in\mathcal{J}-\{j\}}p_{k}^{2},\sum_{k\in\mathcal{J}-\{j\}}\delta_{k}^{2},\sum_{k\in\mathcal{J}-\{j\}}p_{k}\delta_{k}\right)$.
\end{example}

\section{Game-theoretic Applications\label{sec:Economic-Applications}}

This section examines three game-theoretic applications of the mathematical results from the previous section: Two-step policy function estimation, Moment-based Markov equilibrium, and Aggregative games.

\subsection{Two-step policy function estimation\label{subsec:Two-step-policy-function}}

Previous studies have developed practical tools for empirically analyzing oligopolistic markets using a structural approach. One such example is the two-step policy estimation method for static and dynamic incomplete information games (e.g., \citealp{bajari2007estimating}). This method consists of two stages: in the first stage, firms' policy functions---such as those governing entry, exit, or investment decisions---are estimated nonparametrically. In the second stage, structural parameters (e.g., investment cost parameters, entry/exit cost parameters) are estimated using the previously obtained policy functions. Once structural parameters are estimated, various counterfactual scenarios, such as the impact of government policy changes, can be quantitatively evaluated. A key advantage of the two-step method is that it does not require solving for equilibrium to estimate structural parameters, unlike the Nested Fixed-Point (NFXP) approach. Additionally, this method is more robust to the presence of multiple equilibria. Due to its convenience, many empirical studies have adopted this approach, as discussed in \citet{aguirregabiria2021dynamic}.

Consistent estimation of policy functions is crucial for precise estimation and counterfactual simulation in empirical models of games when applying two-step estimation methods. The propositions in the previous section provide insights into the functional form of policy functions used in these estimations.

Consider a dynamic competition model in the style of \citet{Ericson1995}, where firms across all markets follow the same symmetric Markov perfect equilibrium (MPE). In this equilibrium, firms in markets with the same number of competitors share a common policy function $\sigma\left(s_{jm},\left\{ s_{j^{\prime}m}\right\} _{j^{\prime}\neq j},y_{m},\nu_{jm}\right)$, which remains invariant under permutations of competitors' states $\left\{ s_{j^{\prime}m}\right\} _{j^{\prime}\neq j}$. Here, $s_{jm}\in\Omega\subset[0,\infty)^{I}$ represents firm $j$'s states in market $m$, $\nu_{jm}$ denotes the firm's private shock, and $y_{m}$ captures market $m$'s market-level states. Let $\mathcal{J}_{m}$ be the set of firms in market $m$, and let $J$ be the maximum number of firms across all the markets.

Suppose the domain of $s_{jm}$, $\Omega$, and the policy function, satisfy either condition (a) or (b) in Proposition \ref{prop:no_nest_case}. If the firm with $s_{jm}=0$ have a negligible impact on other firms' policy functions, condition (b) holds. Otherwise, a change of variables can be applied to ensure condition (a) is satisfied. Then, by Proposition 2, the policy function can be expressed as $\sigma\left(s_{jm},\left\{ s_{j^{\prime}m}\right\} _{j^{\prime}\neq j},y_{m},\nu_{jm}\right)=\exists\widetilde{\sigma}\left(s_{jm},\sum_{j^{\prime}\in\mathcal{J}_{m}-\{j\}}\eta_{I,J-1}\left(s_{j^{\prime}m}\right),y_{m},\nu_{jm}\right)$ .

It implies probability or density of choosing $a_{jm}$ at state $\left(s_{jm},\left\{ s_{j^{\prime}m}\right\} _{j^{\prime}\neq j},y_{m}\right)$ can be represented as $Pr\left(a_{jm}|s_{jm},\left\{ s_{j^{\prime}m}\right\} _{j^{\prime}\neq j},y_{m}\right)=\exists g\left(s_{jm},\sum_{j^{\prime}\in\mathcal{J}_{m}-\{j\}}\eta_{I,J-1}\left(s_{j^{\prime}m}\right),y_{m}\right)$.\footnote{See \citet{bajari2007estimating} for a discussion on the continuous choice case.} For instance, in the case where $I=1$, the probability can be rewritten as $Pr\left(a_{jm}|s_{jm},\left\{ s_{j^{\prime}m}\right\} _{j^{\prime}\neq j},y_{m}\right)=\exists g\left(s_{jm},\left\{ \sum_{j^{\prime}\in\mathcal{J}_{m}-\{j\}}\left(s_{j^{\prime}m}\right)^{q}\right\} _{q=1,\cdots,J-1},y_{m}\right)$ using a common function $g$ regardless of the number of firms in each market. Hence, we can estimate a policy function as a function of unnormalized moments of competitors' states, even when the number of firms varies across markets.\footnote{Instead we can separately estimate policy functions for markets with the same number of firms. If the number of observations is small and the number of firms varies significantly across markets, this approach may be ineffective.} Because the explicit form of $g$ is unknown, $g$ should be approximated nonparametrically. Though the number of moments should be $\kappa(I,J)$ to exactly represent $Pr\left(a_{jm}|s_{jm},\left\{ s_{j^{\prime}m}\right\} _{j^{\prime}\neq j},y_{m}\right)$, which can be huge for high values of $J$, we can expect adding higher-order moments yields minor differences. In the setting $I=1$, choosing a function $\widetilde{g}\left(s_{jm},\left\{ \sum_{j^{\prime}\in\mathcal{J}_{m}-\{j\}}\left(s_{j^{\prime}m}\right)^{q}\right\} _{q=1,\cdots,K\leq J-1},y_{m}\right)$ might be sufficient to approximate the function $g$ well. Although it depends on empirical contexts, a small number of moments might be enough to approximate the functions well.

Although the use of (unnormalized) moments in policy function estimation has been common in the literature,\footnote{For instance, \citet{Ryan2012} considered a capacity competition model in the cement industry, and estimated firms' investment policy function, as a function of the sum of competitors' capacity.} its formal justification has remained unclear. Without a rigorous foundation, the strategy may fail. The findings in this study formally justify such an approach, provided that the number of moments is sufficiently large and that adding higher-order terms results in only minor differences.\footnote{In the context of nonparametric estimations, using polynomial terms to approximate an unknown continuous function nonparametrically is justified, because there exists an polynomial function sufficiently close to the unknown continuous function by Weierstrass approximation theorem. Analogously, the strategy of using (unnormalized) moments in policy function estimation is justified, because we can exactly represent the function when the number of moments is $\kappa(I,J)$. Note that we might be able to approximate the function well, even when the number of moments is less than $\kappa(I,J)$.}

\subsection{Moment-based Markov equilibrium\label{subsec:Moment-based-Markov-equilibrium}}

\citet{ifrach2017framework} proposed the Moment-based Markov equilibrium (MME) in dynamic oligopoly models with a small number of dominant firms and a large number of fringe firms as an approximation of MPE.\footnote{The concept of MME is related to mean-field games, which is a model where multiple agents interact with each other through an empirical distributions (normalized moments) of other agents' pairs of states and actions (cf. \citealp{lauriere2022learning,han2025deepham}). } One implicit assumption of MPE is that firms monitor all competitiors' state variables. However, as the number of firms increases, the total number of states grows exponentially, making numerical solutions impractical. In contrast, MME reduces the state space by tracking only the states of dominant incumbent firms and a few moments of fringe firms' states, thereby alleviating computational burdens. Due to its efficiency, MME has been applied in recent empirical studies (e.g., \citealp{corbae2021capital,jeon2022learning}).

\citet{ifrach2017framework} showed MME becomes an exact approximation of MPE in the constant returns to scale model. However, the correspondence in more general settings remains unclear. The results in Section \ref{sec:Universal-representation} suggest that symmetric MME is equivalent to MPE when a sufficiently large number of symmetric power sums are used as moments.

To clarify the point, consider a simplified setting without dominant firms or entry/exit decisions.\footnote{Essential ideas would not be lost with this simplification.} Let $x_{j}\in\mathbb{N}^{q}$ be firm $j$'s states, such as its capacity or product quality level. 

Suppose firms follow symmetric MPE. Then, Proposition \ref{prop:no_nest_case} implies that the firms' value function can be reformulated as $V\left(x_{j},\left\{ x_{j^{\prime}}\right\} _{j^{\prime}\in\mathcal{J}-\{j\}}\right)=\overline{V}\left(x_{j},\sum_{j^{\prime}\in\mathcal{J}-\{j\}}\eta_{I,J-1}(x_{j^{\prime}})\right)$. Similarly, firms' investment strategy can be reformulated as a function of $\sum_{j^{\prime}\in\mathcal{J}-\{j\}}\eta_{I,J-1}(x_{j^{\prime}})$. Furthermore, Corollary \ref{cor:chen_et_al} in Appendix \ref{sec:Proof-symmetry} implies there exists an one-to-one mapping between $\left\{ x_{j}\right\} _{j\in\mathcal{J}}$ and $\sum_{j^{\prime}\in\mathcal{J}-\{j\}}\eta_{I,J-1}(x_{j^{\prime}})$. This implies that solving the model using $\sum_{j^{\prime}\in\mathcal{J}-\{j\}}\eta_{I,J-1}(x_{j^{\prime}})$ as state variables is equivalent to solving the MPE---effectively making MME a valid approximation.

Although the degree of moments should be $J-1$ to exactly represent $V\left(x_{j},\left\{ x_{j^{\prime}}\right\} _{j^{\prime}\in\mathcal{J}-\{j\}}\right)$, we can expect adding higher order terms provides smaller information in many cases, as in the discussion of two-step policy function estimation. In such cases, using small number of moments $\left(\sum_{j^{\prime}\in\mathcal{J}-\{j\}}\eta_{I,K\leq J-1}(x_{j^{\prime}});\left\{ \sum_{j^{\prime}\in\mathcal{J}-\{j\}}x_{j^{\prime}}^{q}\right\} _{q=1,\cdots,K\leq J-1}\text{\ if}\ I=1\right)$ is sufficient to approximate the MPE well\footnote{\citet{ifrach2017framework} derived deviation error bounds when using MME as an approximation of MPE.}. 

\subsection{Aggregative games\label{subsec:Aggregative-games}}

The results in Section \ref{sec:Universal-representation} provide insights into aggregative games, which have been widely used to analyze strategic interactions in economic models. Aggregative games are characterized by the fact that each player's payoff depends on their own strategy and an aggregate of all players' strategies. This structure simplifies equilibrium analysis and has been applied in various economic contexts.\footnote{See \citet{jensen2018aggregative} for selective survey of aggregative games.} 

Let $g_{j}:\mathcal{A}\rightarrow\mathbb{R}$ be player $j$'s payoff function, where $\ensuremath{\mathcal{A}}\ensuremath{\equiv}\left\{ \mathcal{A}_{j}\right\} _{j\in\mathcal{J}}$, and $\mathcal{A}_{j}$ denotes player $j$'s strategy set. The following definition formalizes the concept of generalized (fully) aggregative games:
\begin{defn}
(\citealp{cornes2012fully}) The game is called generalized (fully) aggregative, if there exist functions $\widetilde{g_{j}}:\mathcal{A}_{j}\times\mathbb{R}\rightarrow\mathbb{R}$ and $h_{j}:\mathcal{A}_{j}\rightarrow\mathbb{R}$ such that $g_{j}(a)=\widetilde{g_{j}}\left(a_{j},\sum_{j\in\mathcal{J}-\{j\}}h_{j^{\prime}}(a_{j^{\prime}})\right)$.
\end{defn}
We can analogously define:
\begin{defn}
The game is called multidimensional generalized (fully) aggregative, if there exists an integer $K$ and functions $\widetilde{g_{j}}:\mathcal{A}_{j}\times\mathbb{R}^{K}\rightarrow\mathbb{R}$ and $h_{j}:\mathcal{A}_{j}\rightarrow\mathbb{R}^{K}$ such that $g_{j}(a)=\widetilde{g_{j}}\left(a_{j},\sum_{j^{\prime}\in\mathcal{J}-\{j\}}h_{j^{\prime}}(a_{j^{\prime}})\right)$.
\end{defn}
This extends the concept of aggregative games to settings where multiple aggregate variables influence payoffs. Then, we obtain the following statement:
\begin{prop}
\label{prop:aggregative_game}Suppose player $i$'s payoff function $g_{j}$ can be written as $g_{j}(a)=\widetilde{\widetilde{g_{j}}}\left(a_{j},\left\{ a_{j^{\prime}},x_{j^{\prime}}^{0}\right\} _{j^{\prime}\in\mathcal{J}-\{j\}}\right)$, $\widetilde{\widetilde{g_{j}}}$ is continuous, and permutation invariant with respect to $\left\{ a_{j^{\prime}},x_{j^{\prime}}^{0}\right\} _{j^{\prime}\in\mathcal{J}-\{j\}}$. Then, the game is multidimensional generalized aggregative.
\end{prop}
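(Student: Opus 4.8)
The plan is to obtain Proposition~\ref{prop:aggregative_game} as a direct application of Proposition~\ref{prop:no_nest_case}, with the object aggregated over competitors being the \emph{pair} $z_{j'}\equiv\left(a_{j'},x_{j'}^{0}\right)$, viewed as a single $I$-dimensional variable (where $I$ is the common dimension of these pairs, $I=\dim\mathcal{A}_{j'}+\dim x_{j'}^{0}$), and with player $j$'s own action $a_{j}$ playing the role of the non-permuted argument $y$ in that proposition. Under this identification the hypothesis that $\widetilde{\widetilde{g_{j}}}$ is continuous and permutation invariant in $\left\{ a_{j'},x_{j'}^{0}\right\} _{j'\in\mathcal{J}-\{j\}}$ is exactly the hypothesis of Proposition~\ref{prop:no_nest_case} imposed on the competitor variables $\left\{ z_{j'}\right\}$.

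First I would secure the domain conditions of Proposition~\ref{prop:no_nest_case}. Assuming the strategy sets and the admissible fixed states lie in compact sets (or restricting to compact subsets), the pairs $z_{j'}$ range over a compact set, and a componentwise strictly increasing continuous change of variables $T$, e.g.\ $T(z)=\exp(z)$, carries them into a compact $\Omega\subset(0,\infty)^{I}$ with $\min\Omega>0_{I}$, so condition~(a) holds; since the same $T$ is applied to every competitor, continuity and permutation invariance are preserved. I would then define $V\left(x_{j},\left\{ \tilde{z}_{j'}\right\} ,y\right)\equiv\widetilde{\widetilde{g_{j}}}\left(y,\left\{ T^{-1}(\tilde{z}_{j'})\right\} \right)$, which simply ignores its dummy first argument $x_{j}\in\Omega$ and takes $y=a_{j}\in\Upsilon$; this $V$ is continuous and permutation invariant in the competitor slots, so Proposition~\ref{prop:no_nest_case} yields a continuous $\psi$ representing it through $\sum_{j'}\eta_{I,J-1}(\tilde{z}_{j'})$. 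Fixing the dummy at an arbitrary $c\in\Omega$ and setting $K\equiv\kappa(I,J-1)$, $\widetilde{g_{j}}(a_{j},w)\equiv\psi(c,w,a_{j})$, and $h_{j'}(a_{j'})\equiv\eta_{I,J-1}\!\left(T\left(a_{j'},x_{j'}^{0}\right)\right)$, I obtain
\[
g_{j}(a)=V\!\left(c,\left\{ \tilde{z}_{j'}\right\} ,a_{j}\right)=\psi\!\left(c,\sum_{j'\in\mathcal{J}-\{j\}}\eta_{I,J-1}(\tilde{z}_{j'}),a_{j}\right)=\widetilde{g_{j}}\!\left(a_{j},\sum_{j'\in\mathcal{J}-\{j\}}h_{j'}(a_{j'})\right),
\]
which is precisely the multidimensional generalized aggregative form.

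The step needing the most care is that the target definition requires each aggregation map $h_{j'}$ to be a function of the choice variable $a_{j'}$ \emph{only}, whereas the summand $\eta_{I,J-1}\!\left(T(a_{j'},x_{j'}^{0})\right)$ also involves the fixed characteristic $x_{j'}^{0}$. This is reconciled by exploiting that $h_{j'}$ is permitted to be player-specific: since $x_{j'}^{0}$ is a constant attached to player $j'$, it can be hard-wired into the definition of $h_{j'}$, which therefore remains a genuine function of $a_{j'}$ alone. The remainder is routine bookkeeping---checking that the dummy own-state slot (which Proposition~\ref{prop:no_nest_case} ties to the same $\Omega$ as the competitors) may be fixed at a constant $c$ without loss because $V$ is built to ignore it, that reordering the arguments of $\psi$ is harmless, and that $T$ disturbs neither continuity nor invariance. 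A secondary point is that Proposition~\ref{prop:aggregative_game} does not list the compactness and positivity conditions explicitly, so I would state at the outset the mild regularity under which the change of variables delivers condition~(a), exactly as flagged in Remark~\ref{rem:wagstaff_2022}.
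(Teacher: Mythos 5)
Your proposal is correct and follows essentially the same route as the paper: a direct application of Proposition~\ref{prop:no_nest_case} with the pair $\left(a_{j'},x_{j'}^{0}\right)$ as the aggregated variable, followed by absorbing the fixed characteristic $x_{j'}^{0}$ into the player-specific map $h_{j'}$ so that it depends on $a_{j'}$ alone. Your treatment is in fact more careful than the paper's three-line argument, since you make explicit the compactness/positivity regularity needed for condition~(a) and the handling of the own-action slot, which the paper leaves implicit.
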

Proposition \ref{prop:aggregative_game} implies all the games satisfying the condition of permutation invariance and continuity of payoff functions are in the form of multidimensional generalized aggregative. Generalized aggregative games, which have been intensively studied in the literature, are the special cases with $K=1$. The case with $K=1$ includes the models with $g_{j}(a)=\widetilde{g_{j}}\left(a_{j},\sum_{j\in\mathcal{J}-\{j\}}a_{j^{\prime}}\right)$. Investigating the setting with $K\geq2$ would enable exploring more general game-theoretic models not studied in the previous studies. I leave it for further research.

\section{Conclusions\label{sec:Conclusions}}

This paper has presented universal representations of symmetric functions with multidimensional variable-size variables, providing a foundation for assessing the validity of approximation methods that aggregate information using moments. It has also explored how these findings offer insights into game-theoretic applications, including two-step policy function estimation, Moment-based Markov Equilibrium (MME), and aggregative games.

While the primary focus has been on these three economic applications, the mathematical results introduced in this study would have broader implications. Future research could uncover additional applications, further extending the relevance of these findings. Investigating how symmetry-based representations can enhance computational efficiency and theoretical modeling in other domains would be a promising direction for further study.

\section*{Acknowledgments}

\selectlanguage{american}%
This paper is based on Chapter 6 of my Ph.D. dissertation at the University of Tokyo. I thank Hiroshi Ohashi, Kei Kawai, and my dissertation committee members for their comments. 

\section*{Declaration of interest statement}

The author has no conflicts of interest to declare that are relevant to the content of this article.

\selectlanguage{english}%
\pagebreak{}

\appendix

\section{Additional results\label{sec:Additional-results-symmetry}}

In this section, we further discuss the implications of the results to other economic models.

\subsection{Merger\label{subsec:Merger}}

Estimated policy functions are sometimes used to simulate counterfactual outcomes. For instance, \citet{benkard2020simulating} employed estimated policy functions to quantitatively compare outcomes with and without a merger in the airline industry allowing for firms' entry/exit decisions, though they did not formally discuss the justifications on the use of a common policy function.\footnote{They estimated a common policy function regardless of the number of firms in each market by using data of many independent markets with varying number of firms.}\footnote{Analogous idea was also applied in \citet{Bruegge2025using}, quantitatively evaluating the price effects of a merger in the airline industry using policy functions on prices.} As discussed below, it is justifiable to assess the effect of a merger by using a common policy function with unnormalized moments.

As discussed in Section \ref{subsec:Two-step-policy-function}, the policy function can be represented as $Pr\left(a_{jm}|s_{jm},\left\{ s_{j^{\prime}m}\right\} _{j^{\prime}\neq j},y_{m}\right)=\exists g\left(s_{jm},\sum_{j^{\prime}\in\mathcal{J}_{m}-\{j\}}\eta_{I,J-1}\left(s_{j^{\prime}m}\right),y_{m}\right)\approx\widetilde{g}\left(s_{jm},\left\{ \sum_{j^{\prime}\in\mathcal{J}_{m}-\{j\}}\left(s_{j^{\prime}m}\right)^{q}\right\} _{q=1,\cdots,K\leq J-1},y_{m}\right)$. Suppose that the function $\widetilde{g}$ is estimated consistently and that a merger leads to an equilibrium under $J_{1}\leq J$ firms to the one under $J_{2}<J_{1}\leq J$ firms. Proposition \ref{prop:no_nest_case} implies that we can simulate the outcomes with and without the merger by using the common policy function $\widetilde{g}$. 

\subsection{Models with dynamic demand\label{subsec:Models-with-dynamic-demand}}

In the models with dynamic demand where the current and future consumer demand is related, firms must track multiple state variables--such as the distribution of heterogeneous consumers' inventory in durable goods--ptimize pricing and investment decisions (e.g., \citealp{goettler2011does}). The results presented in this study also provide insights into such models.

Here, let $B$ be $K$-dimensional state variables firms must monitor. For simplicity, consider the case of a monopolistic firm, and suppose we aim to approximate the firm's value function $V(B)$. When $K$ is large and we don't use any knowledge of the structure of $V$, in general it is difficult to solve the high-dimensional model. Although $V$ may not initially appear to exhibit symmetric properties, symmetry can often be identified based on the model\textquoteright s structure.

Suppose the $k$-th state variable $B^{(k)}$ is parameterized by an $n$-dimensional vector $\theta^{(k)}$, aallowing the value function to be rewritten as $V\left(\left\{ \left(B^{(k)},\theta^{(k)}\right)\right\} _{k=1,\cdots,K}\right)$. Here, $V$ is permutation invariant with respect to $\left(B^{(k)},\theta^{(k)}\right)$. For example, consider a durable goods model where a monopolistic firm tracks discrete consumer types and their product holdings. In this case, state variables correspond to the fraction of each consumer type for each product age, which can be parameterized by preference parameters $\theta_{pref}$ and product age parameters $\theta_{age}$. It is plausible to assume that the order of $\left(B^{(k)},\theta_{pref}^{(k)},\theta_{age}^{(k)}\right)$ does not affect the firm's decision-making.

Then, Proposition \ref{prop:variable_size} implies $V$ can be reformulated as $V\left(\left\{ \left(B^{(k)},\theta^{(k)}\right)\right\} _{k=1,\cdots,K}\right)=\exists\psi\left(\sum_{k=1}^{K}\eta_{1+n,K}\left(B^{(k)},\theta^{(k)}\right)\right)$, where $\psi:\mathbb{R}^{\kappa(1+n,K)}\rightarrow\mathbb{R}$. Hence, we can alternatively use moments $\eta_{1+n,K}\left(B^{(k)},\theta^{(k)}\right)$ as states.

\subsection{Models with Multi-product firms\label{subsec:Models-with-Multi-product}}

Although the results in Section \ref{sec:Universal-representation} do not directly apply to models with multi-product firms, they can be extended to accommodate such settings. The following proposition is particularly relevant for analyzing multi-product firm behavior.\footnote{\citet{nocke2018multiproduct} developed an aggregative games approach to study multiproduct-firm oligopoly. The current study relates to their study, though our results do not assume specific demand structure. } 

\begin{prop}
\label{prop:nest_case}(Nested structure)

Let $\Omega\subset[0,\infty)^{I}$ be a compact subset of $\mathbb{R}^{I}$, and let $\Upsilon$ be a subset of $\mathbb{R}^{C}$. Suppose either of the following conditions holds for a continuous permutation invariant function $V:\Omega^{\leq J}\times\left(\Omega^{\leq J}\right)^{\leq(F-1)}\times\Upsilon\rightarrow\mathbb{R}$:

(a). $\min\Omega>0_{I}$

(b). {\footnotesize{}$V\left(\{x_{j}\}_{j\in\mathcal{J}_{f}},\left\{ \{x_{j}\}_{j\in\mathcal{J}_{\widetilde{f}}}\right\} _{\widetilde{f}\in\mathcal{F-}\{f\}},y\right)=V\left(\{x_{j}\}_{j\in\mathcal{J}_{f}\ s.t.\ x_{j}\neq0},\left\{ \{x_{j}\}_{j\in\mathcal{J}_{\widetilde{f}}\ s.t.\ x_{j}\neq0}\right\} _{\widetilde{f}\in\mathcal{F-}\{f\}\ s.t.\ \neg(x_{j}=0\ \forall j\in\mathcal{J}_{f^{\prime}})},y\right)$}{\footnotesize\par}

Then, there exist continuous functions $\psi_{1}:\mathbb{R}^{\kappa(I,J)}\rightarrow\mathbb{R}^{\kappa(IJ,F-1)}$ and $\psi_{2}:\mathbb{R}^{\kappa(I,J)}\times\mathbb{R}^{\kappa(IJ,F-1)}\times\mathbb{R}^{C}\rightarrow\mathbb{R}$ such that 

\[
V\left(\{x_{j}\}_{j\in\mathcal{J}_{f}},\left\{ \{x_{j}\}_{j\in\mathcal{J}_{\widetilde{f}}}\right\} _{\widetilde{f}\in\mathcal{F-}\{f\}},y\right)=\psi_{2}\left(\sum_{k\in\mathcal{J}_{f}}\eta_{I,J}(x_{k}),\sum_{\widetilde{f}\in\mathcal{F}-\{f\}}\psi_{1}\left(\sum_{k\in\mathcal{J}_{\widetilde{f}}}\eta_{I,J}(x_{k})\right),y\right).
\]
\end{prop}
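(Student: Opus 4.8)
The plan is to apply the single-level representation of Proposition~\ref{prop:variable_size} \emph{twice} in a nested fashion --- once within each firm's product multiset and once across firms --- and to glue the two levels together using the one-to-one (injectivity) property behind Corollary~\ref{cor:chen_et_al}, compactness, and the Tietze extension theorem. The recurring mechanism is that conditions (a)/(b) make padding a multiset with the zero element harmless: it changes neither the relevant power sum (since $\eta$ vanishes at $0$) nor the value of $V$, so variable-size multisets may be treated as fixed-size. The single nested condition (b) in the statement is written precisely so that this padding is simultaneously harmless at the product level (zero products inside a firm) and at the firm level (all-zero firms).

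First I would handle one firm at a time. For a product multiset $\{x_j\}_{j\in\mathcal{J}_{\tilde f}}\in\Omega^{\leq J}$, the map $\Phi_{\mathrm{in}}(\{x_j\})=\sum_k\eta_{I,J}(x_k)\in[0,\infty)^{\kappa(I,J)}$ is continuous and, by Corollary~\ref{cor:chen_et_al} together with (a)/(b), injective on multisets; since the permutation quotient of the compact set $\Omega^{\leq J}$ is compact, $\Phi_{\mathrm{in}}$ is a homeomorphism onto its compact image $\mathcal{I}_{\mathrm{in}}$. I would then re-encode each firm as a single point of $\mathbb{R}^{IJ}$ through a continuous \emph{injective} canonical map $\chi_0$ from size-$\leq J$ multisets in $\Omega$ to $\mathbb{R}^{IJ}$ --- e.g. order statistics when $I=1$, or the coefficients of the monic polynomial whose roots are the products (using $\mathrm{Sym}^{J}(\mathbb{C})\cong\mathbb{C}^{J}=\mathbb{R}^{2J}$) when $I=2$. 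Setting $\psi_1:=\eta_{IJ,F-1}\circ\chi_0\circ\Phi_{\mathrm{in}}^{-1}$ on $\mathcal{I}_{\mathrm{in}}$ gives a continuous map into $\mathbb{R}^{\kappa(IJ,F-1)}$, which I extend to all of $\mathbb{R}^{\kappa(I,J)}$ by Tietze.

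With $\beta_{\tilde f}=\sum_{k\in\mathcal{J}_{\tilde f}}\eta_{I,J}(x_k)$, the across-firm aggregate becomes $\sum_{\tilde f}\psi_1(\beta_{\tilde f})=\sum_{\tilde f}\eta_{IJ,F-1}(\chi_0\circ\Phi_{\mathrm{in}}^{-1}(\beta_{\tilde f}))$, i.e. the degree-$(F-1)$ multisymmetric power sum of the $IJ$-dimensional firm points. Applying Corollary~\ref{cor:chen_et_al} at this outer level (variables in $[0,\infty)^{IJ}$, at most $F-1$ of them) shows this aggregate determines the multiset of firm points, hence --- by injectivity of $\chi_0$ and $\Phi_{\mathrm{in}}$ --- the whole competitor collection, the only ambiguity being a firm equal to $0_{IJ}$, which is excluded under (a) and irrelevant under (b). Consequently the map $E$ sending $(\{x_j\}_{j\in\mathcal{J}_f},\{\{x_j\}_{j\in\mathcal{J}_{\tilde f}}\}_{\tilde f})$ to $(\alpha,\gamma)$ with $\alpha=\sum_{k\in\mathcal{J}_f}\eta_{I,J}(x_k)$ and $\gamma=\sum_{\tilde f}\psi_1(\beta_{\tilde f})$ is a continuous bijection from a compact space onto its compact image, hence a homeomorphism; defining $\psi_2((\alpha,\gamma),y):=V(E^{-1}(\alpha,\gamma),y)$ reproduces $V$, and a final Tietze extension yields $\psi_2$ on $\mathbb{R}^{\kappa(I,J)}\times\mathbb{R}^{\kappa(IJ,F-1)}\times\mathbb{R}^{C}$.

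The main obstacle is securing the exponent $IJ$ rather than the larger $\kappa(\kappa(I,J),F-1)$ produced by naively feeding the inner power sum $\beta_{\tilde f}\in\mathbb{R}^{\kappa(I,J)}$ straight into the outer power sum: this requires a \emph{continuous injective} encoding of a size-$\leq J$ multiset in $\mathbb{R}^I$ into exactly $\mathbb{R}^{IJ}$. That encoding is transparent for $I=1$ and available for $I=2$ via the root-coefficient correspondence, but for general $I$ it is the delicate step, since the symmetric-product orbit space need not embed in $\mathbb{R}^{IJ}$. The remaining care is bookkeeping: checking that $\chi_0$ stays continuous across changes in the number of products (where sorting ties and zero-padding could create seams) and that the nested condition (b) neutralizes the zero element coherently at both levels. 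Once these bicontinuous encodings are in place, the representation and both Tietze extensions are routine.
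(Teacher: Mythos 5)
Your proposal follows essentially the same route as the paper's proof: an inner application of the power-sum representation within each firm, a re-encoding of each firm's product multiset as a single point of $\mathbb{R}^{IJ}$, an outer application of the power-sum representation across firms in dimension $IJ$, and a final composition with $V$ (after zero-padding justified by (a)/(b) via Lemma~\ref{lem:function_extension}) together with Tietze extensions; your remark that a naive double power sum would land in $\mathbb{R}^{\kappa(\kappa(I,J),F-1)}$ correctly identifies why the $\mathbb{R}^{IJ}$ re-encoding is the crux. The one step you leave unresolved --- the continuous injective map $\chi_{0}$ from size-$\leq J$ multisets in $\Omega\subset\mathbb{R}^{I}$ into exactly $\mathbb{R}^{IJ}$ for general $I$ --- is handled in the paper by machinery you already cite: the map $\Lambda$ of Corollary~\ref{cor:chen_et_al} returns the zero-padded, descending-lexicographically sorted matrix $A_{f}\in\mathcal{M}(I,J;\widetilde{\Omega})\subset\mathbb{R}^{I\times J}$, so the paper takes $\Psi_{1}=\Lambda$ (your $\chi_{0}\circ\Phi_{\mathrm{in}}^{-1}$) and sets $\psi_{1}=\eta_{IJ,F-1}\circ\Psi_{1}$, exactly your formula. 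Note, though, that your worry about ``seams'' is not vacuous: the paper simply asserts that $\Psi_{1}$ is a homeomorphism, and the continuity of the sorted-representative map for $I\geq2$ (lexicographic order is not preserved under limits) is precisely the delicate property being delegated to the cited result of \citet{chen2024representation} rather than argued; so your proposal is incomplete at exactly the point where the paper's proof is an appeal to that theorem.
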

The proof is shown at the end of this subsection.

Here, consider the dynamic model where each firm decides whether to introduce each product in each period, as in \citet{sweeting2013dynamic}. Let $x_{j}$ be product $j$'s states, namely, whether the product is already introduced at the beginning of the period. We assume that firms follow symmetric MPE, and firm $f$'s value function is permutation invariant with respect to the order of products of the same firms, and to the order of competitors.

The proposition implies that firm $f$'s value function can be reformulated as: 
\[
V\left(\{x_{j}\}_{j\in\mathcal{J}_{f}},\left\{ \{x_{j}\}_{j\in\mathcal{J}_{\widetilde{f}}}\right\} _{\widetilde{f}\in\mathcal{F-}\{f\}}\right)=\exists\psi_{2}\left(\sum_{k\in\mathcal{J}_{f}}\eta_{I,J}(x_{k}),\sum_{\widetilde{f}\in\mathcal{F}-\{f\}}\exists\psi_{1}\left(\sum_{k\in\mathcal{J}_{\widetilde{f}}}\eta_{I,J}(x_{k})\right)\right).
\]
It indicates that value function can be represented as a function of the sum of each rival firm's products' summary statistics $\left(\sum_{\widetilde{f}\in\mathcal{F}-\{f\}}\psi_{1}\left(\sum_{k\in\mathcal{J}_{\widetilde{f}}}\eta_{I,J}(x_{k})\right)\right)$ and the sum of the moments of own firm's products' states $\left(\sum_{k\in\mathcal{J}_{f}}\eta_{I,J}(x_{k})\right)$.

\section{Proof\label{sec:Proof-symmetry}}

\subsection{Proof of Propositions \ref{prop:variable_size} and \ref{prop:no_nest_case}}

Let $\mathcal{M}(I,J;W)\subset M(I,J:\mathbb{R})$ be the set of matrices whose column vector is in $W\subset\mathbb{R}^{I}$, and the rows are sorted based on the descending lexicographical order.\footnote{The sorting guarantees the uniqueness of $\overline{\Lambda}$ in Proposition \ref{prop:chen_et_al}.} Let $a_{j}$ be the $j$-th column vector of the matrix $A\in\mathcal{M}(I,J;W)$, and let $a_{ij}$ be the $(i,j)$-th element of the matrix $A$.

Besides, for a compact set $\Omega\subset[0,\infty)^{I}$, let $\widetilde{\Omega}\subset[0,\infty)^{I}$ be another compact set such that $\Omega\subset\widetilde{\Omega}$, $\min\widetilde{\Omega}=0_{I}$ and $\max\widetilde{\Omega}=\max\Omega$. 
\begin{prop}
\label{prop:chen_et_al}(Based on \citealp{chen2024representation})

Given a compact subset $W\subset\mathbb{R}^{I}$, there exist a continuous function $\overline{\eta}_{I,J}:W\ (\subset\mathbb{R}^{I})\rightarrow Im(\overline{\eta})\ (\subset\mathbb{R}^{K})$ and a unique function $\overline{\Lambda}:\left\{ \sum_{j=1}^{J}\overline{\eta}_{I,J}(a_{j})|a_{j}\in\Omega\right\} \ (\subset\mathbb{R}^{K})\rightarrow Im(\overline{\Lambda})\ (\subset\mathbb{R}^{J})$ such that

\[
\overline{\eta}_{I,J}^{(s)}(A_{j})=a_{i=1}^{s_{1}}a_{i=2}^{s_{2}}\cdots a_{i=I}^{s_{I}}\ \ \ s\in\left\{ (s_{1},s_{2},\cdots,s_{I})\in\mathbb{Z}_{+}|0\leq s_{1}+s_{2}+\cdots+s_{I}\leq J\right\} 
\]

and

\[
A=\overline{\Lambda}\left(\sum_{j=1}^{J}\overline{\eta}_{I,J}(a_{j})\right)\ \forall A\in\mathcal{M}(I,J;W).
\]
\end{prop}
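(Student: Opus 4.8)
The plan is to split the statement into the (immediate) construction of $\overline{\eta}_{I,J}$ and the (substantive) existence and uniqueness of $\overline{\Lambda}$, and to observe that the latter reduces entirely to showing that the aggregation map $\Phi(A)\equiv\sum_{j=1}^{J}\overline{\eta}_{I,J}(a_{j})$ is injective on $\mathcal{M}(I,J;W)$, i.e.\ that the multisymmetric power sums of degree at most $J$ separate the $\mathcal{S}_{J}$-orbits of the $J$ column vectors. The map $\overline{\eta}_{I,J}$ is defined coordinatewise by the monomials $a^{s}=\prod_{i}a_{i}^{s_{i}}$ with $0\le s_{1}+\cdots+s_{I}\le J$, so it is polynomial and hence continuous; nothing further is needed for that part. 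Granting injectivity of $\Phi$, every point $z$ in the image $\left\{\Phi(A):A\in\mathcal{M}(I,J;W)\right\}$ has a unique preimage multiset $\{a_{1},\dots,a_{J}\}$, which the canonical (descending lexicographic) ordering prescribed in the definition of $\mathcal{M}(I,J;W)$ turns into a unique matrix $A$; setting $\overline{\Lambda}(z)\equiv A$ yields a well-defined function with $A=\overline{\Lambda}(\Phi(A))$, and it is unique because any function satisfying this identity must agree with the inverse of $\Phi$ on its image. Note that the statement asks only for uniqueness, not continuity, of $\overline{\Lambda}$, so this suffices.

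The heart of the argument, and the main obstacle, is the injectivity of $\Phi$, i.e.\ the multisymmetric analogue of the fact that power sums determine a finite multiset. I would reduce the $I$-dimensional problem to the scalar case by polarization. For a linear form $\ell_{t}(a)=\sum_{i=1}^{I}t_{i}a_{i}$ with $t\in\mathbb{R}^{I}$, the multinomial expansion gives
\[
\sum_{j=1}^{J}\ell_{t}(a_{j})^{k}=\sum_{|s|=k}\binom{k}{s}t^{s}\,p_{s}(A),\qquad p_{s}(A)\equiv\sum_{j=1}^{J}a_{j}^{s},
\]
so the scalar power sums $\sum_{j}\ell_{t}(a_{j})^{k}$ for $k=1,\dots,J$ are determined, as polynomials in $t$, by the components $p_{s}(A)$ of $\Phi(A)$ with $|s|\le J$. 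Since we work over $\mathbb{R}$ (characteristic zero), Newton's identities invert the passage from the first $J$ power sums to the first $J$ elementary symmetric polynomials of the scalars $\{\ell_{t}(a_{j})\}_{j=1}^{J}$; these are the coefficients of $\prod_{j}\left(\lambda-\ell_{t}(a_{j})\right)$, so the multiset $\{\ell_{t}(a_{j})\}_{j}$ is recovered for every $t$. This is precisely the step that forces the degree to run up to $J$, the number of columns.

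It then remains to recover the vector multiset $\{a_{j}\}_{j=1}^{J}\subset\mathbb{R}^{I}$ from the multisets of its scalar projections $\{\ell_{t}(a_{j})\}_{j}$. I would pick $t^{*}$ separating the distinct vectors among $a_{1},\dots,a_{J}$ (such $t^{*}$ is generic, since the finitely many hyperplanes $\{t:\langle t,a-a'\rangle=0\}$ over distinct pairs are nowhere dense), so that near $t^{*}$ the sorted order of the values $\ell_{t}(a_{j})$ is locally constant; each branch $t\mapsto\ell_{t}(a_{j})=\langle t,a_{j}\rangle$ is then a well-defined linear function of $t$, and reading off its coefficients recovers each $a_{j}$. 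This produces $\{a_{j}\}$ as a multiset and establishes injectivity of $\Phi$.

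This separation-of-orbits statement is exactly the fundamental theorem on multisymmetric functions specialized to tuples of length equal to the number of points, and it is the result established in the fixed-size multidimensional setting by \citet{chen2024representation}; accordingly, in the write-up I would either invoke their theorem directly for the injectivity of $\Phi$ or reproduce the short polarization argument above. Either way, once injectivity is in hand, the construction of $\overline{\Lambda}$ by canonical sorting and its uniqueness follow immediately, as described in the first paragraph.
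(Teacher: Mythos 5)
Your proposal is correct, but it takes a genuinely different route from the paper: the paper offers no argument at all for this proposition, stating only that it ``corresponds to Theorem 2.1 of \citet{chen2024representation}'' and deferring entirely to that reference, whereas you supply a self-contained proof. Your reduction of existence and uniqueness of $\overline{\Lambda}$ to injectivity of $\Phi(A)=\sum_{j}\overline{\eta}_{I,J}(a_{j})$ on the lexicographically sorted matrices is exactly right, and your polarization argument --- expanding $\sum_{j}\ell_{t}(a_{j})^{k}$ in the $p_{s}(A)$ with $|s|=k\leq J$, inverting via Newton's identities to recover the multiset $\{\ell_{t}(a_{j})\}_{j}$ for every $t$, and then reading off each $a_{j}$ as the gradient of a locally linear branch near a separating $t^{*}$ --- is a clean and standard way to establish that the degree-$\leq J$ multisymmetric power sums separate multisets of $J$ points in $\mathbb{R}^{I}$. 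What the citation-only approach buys is brevity; what yours buys is transparency about why the degree must run up to $J$ and independence from the external result. One caveat: although the statement as literally written asks only for a \emph{unique} $\overline{\Lambda}$, the paper subsequently uses $\widetilde{\Lambda}$ and $\Psi_{1}$ as continuous maps (e.g., $\psi\equiv\overline{V}\circ\widetilde{\Lambda}$ must be continuous in the proof of Proposition \ref{prop:variable_size}, and $\Psi_{1}$ is invoked as a homeomorphism in the proof of Proposition \ref{prop:nest_case}), so you should add the one-line observation that $\Phi$ is a continuous bijection from the compact set $\mathcal{M}(I,J;W)$ onto its image and hence a homeomorphism, giving continuity of $\overline{\Lambda}=\Phi^{-1}$ for free.
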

The statement corresponds to Theorem 2.1 of \citet{chen2024representation}, though some notations differ. Using the proposition, we can easily derive the following:
\begin{cor}
\label{cor:chen_et_al}Given a compact subset $W\subset\mathbb{R}^{I}$, there exist a continuous function $\eta_{I,J}:W\ (\subset\mathbb{R}^{I})\rightarrow Im(\eta)\ (\subset\mathbb{R}^{K-1})$ and a unique function $\Lambda:\left\{ \sum_{j=1}^{J}\eta_{I,J}(a_{j})|a_{j}\in\Omega\right\} \ (\subset\mathbb{R}^{K})\rightarrow Im(\Lambda)\ (\subset\mathbb{R}^{J})$ such that

\[
\eta_{I,J}^{(s)}(A_{j})=a_{i=1}^{s_{1}}a_{i=2}^{s_{2}}\cdots a_{i=I}^{s_{I}}\ \ \ s\in\left\{ (s_{1},s_{2},\cdots,s_{I})\in\mathbb{Z}_{+}|1\leq s_{1}+s_{2}+\cdots+s_{I}\leq J\right\} 
\]

,and

\[
A=\Lambda\left(\sum_{j=1}^{J}\eta_{I,J}(a_{j})\right)\ \forall A\in\mathcal{M}(I,J;W).
\]
\end{cor}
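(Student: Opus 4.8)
The plan is to read the corollary off Proposition \ref{prop:chen_et_al} directly, by observing that $\eta_{I,J}$ differs from $\overline{\eta}_{I,J}$ only by the deletion of a single, constant coordinate. Concretely, the multi-index range in the corollary, $1\le s_1+\cdots+s_I\le J$, is exactly the range in the proposition, $0\le s_1+\cdots+s_I\le J$, with the one index $s=0_I$ removed. Since $\overline{\eta}_{I,J}^{(0_I)}(a_j)=a_{i=1}^{0}\cdots a_{i=I}^{0}=1$ identically, $\eta_{I,J}$ is simply the composition of $\overline{\eta}_{I,J}$ with the coordinate projection that forgets the $s=0_I$ entry. Hence $\eta_{I,J}$ is continuous, and its image lies in $\mathbb{R}^{K-1}$ with $K-1=\binom{J+I}{I}-1=\kappa(I,J)$, matching the stated codomain.

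First I would make the bookkeeping between the two aggregates precise. For any columns $a_1,\dots,a_J$, the $s=0_I$ coordinate of $\sum_{j=1}^{J}\overline{\eta}_{I,J}(a_j)$ equals $\sum_{j=1}^{J}1=J$, a fixed constant, while the remaining $K-1$ coordinates are exactly $\sum_{j=1}^{J}\eta_{I,J}(a_j)$. I would therefore introduce the affine insertion map $\iota:\mathbb{R}^{K-1}\to\mathbb{R}^{K}$ that places the value $J$ in the $s=0_I$ slot and copies the other coordinates verbatim, so that
\[
\iota\left(\sum_{j=1}^{J}\eta_{I,J}(a_j)\right)=\sum_{j=1}^{J}\overline{\eta}_{I,J}(a_j)
\]
for every choice of $a_1,\dots,a_J$. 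In particular, $\iota$ restricts to a bijection from $\left\{\sum_{j=1}^{J}\eta_{I,J}(a_j)\mid a_j\in\Omega\right\}$ onto the domain $\left\{\sum_{j=1}^{J}\overline{\eta}_{I,J}(a_j)\mid a_j\in\Omega\right\}$ of $\overline{\Lambda}$.

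Next I would define $\Lambda:=\overline{\Lambda}\circ\iota$ on $\left\{\sum_{j=1}^{J}\eta_{I,J}(a_j)\mid a_j\in\Omega\right\}$. Continuity of $\Lambda$ is immediate, being a composition of the continuous $\overline{\Lambda}$ with the affine (hence continuous) map $\iota$. For the recovery identity, I would take any $A\in\mathcal{M}(I,J;W)$ and chain the two displays with Proposition \ref{prop:chen_et_al}: $A=\overline{\Lambda}\left(\sum_{j=1}^{J}\overline{\eta}_{I,J}(a_j)\right)=\overline{\Lambda}\left(\iota\left(\sum_{j=1}^{J}\eta_{I,J}(a_j)\right)\right)=\Lambda\left(\sum_{j=1}^{J}\eta_{I,J}(a_j)\right)$. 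Uniqueness of $\Lambda$ transfers from that of $\overline{\Lambda}$: since $\iota$ is a bijection between the two domains, any rival $\Lambda'$ with the recovery property would yield $\Lambda'\circ\iota^{-1}$ as a second function with the defining property of $\overline{\Lambda}$, contradicting its uniqueness.

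The construction has essentially no hard step; the only point to verify carefully is that the deleted coordinate is genuinely uninformative, i.e.\ that after summation it collapses to the fixed value $J$ regardless of the $a_j$, so that dropping it (and reinserting it via $\iota$) loses nothing and preserves both the recovery property and its uniqueness.
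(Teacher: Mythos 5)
Your proposal is correct and takes essentially the same route as the paper: your insertion map $\iota$ is exactly the paper's homeomorphism $\gamma(x_{1},\cdots,x_{K-1})=(J,x_{1},\cdots,x_{K-1})$, and both arguments obtain $\Lambda$ by composing $\overline{\Lambda}$ with it, using that the deleted $s=0_{I}$ coordinate sums to the constant $J$. If anything, your composition $\Lambda=\overline{\Lambda}\circ\iota$ has the direction written correctly, whereas the paper's $\Lambda\equiv\overline{\Lambda}\circ\gamma^{-1}$ appears to be a typo for $\overline{\Lambda}\circ\gamma$ given how $\gamma$ is oriented.
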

\begin{proof}
First, we define a homeomorphism $\gamma:\left\{ \sum_{j=1}^{J}\eta_{I,J}(a_{j})|a_{j}\in\Omega\right\} \rightarrow\left\{ \sum_{j=1}^{J}\overline{\eta}_{I,J}(a_{j})|a_{j}\in\Omega\right\} $ such that $\gamma(x_{1},\cdots,x_{K-1})=(J,x_{1},\cdots,x_{K-1})$. Then, $\sum_{j=1}^{J}\overline{\eta}_{I,J}(a_{j})=\gamma\left(\sum_{j=1}^{J}\eta_{I,J}(a_{j})\right)$ holds because $\sum_{j=1}^{J}\overline{\eta}_{I,J}^{\left(s=(0,\cdots,0)\right)}(x_{j})=\sum_{j=1}^{J}1=J$, and we have $A=\overline{\Lambda}\left(\sum_{j=1}^{J}\overline{\eta_{I,J}}(a_{j})\right)=\overline{\Lambda}\left(\gamma^{-1}\left(\sum_{j=1}^{J}\eta_{I,J}(a_{j})\right)\right)$. Hence, by defining a function $\Lambda\equiv\overline{\Lambda}\circ\gamma^{-1}$, we obtain the statement.
\end{proof}
The next lemma is used to extend the results above to the case of functions with variable-size variables.
\begin{lem}
\label{lem:function_extension}For a compact subset $\Omega\subset[0,\infty)^{I}$ such that $\min\Omega>0$, let $\widehat{\Omega}\equiv\Omega\cup\prod_{i=1}^{I}\left[0,\min\Omega_{i}\right]$. For a continuous permutation invariant function $V:\Omega^{\leq J}\rightarrow\mathbb{R}$, define a function $\overline{V}:\mathcal{M}(I,J,\widehat{\Omega})\rightarrow\mathbb{R}$ such that $\overline{V}(x_{1},\cdots,x_{J})=\sum_{\left\{ j\in\{1,\cdots,J\}:x_{j}\in\Omega\right\} \subset\widetilde{\mathcal{J}}\subset\{1,\cdots,J\}}\left[\prod_{j\in\widetilde{\mathcal{J}}}a(x_{j})\right]\cdot\left[\prod_{j\in\{1,\cdots,J\}-\widetilde{\mathcal{J}}}b(x_{j})\right]\cdot V\left(\left\{ \max\{x_{j},\min\Omega\}\right\} {}_{j\in\widetilde{\mathcal{J}}}\right)$,

where $a(x)\equiv\prod_{i}\min\left\{ 1,\frac{x^{(i)}}{\min\Omega^{(i)}}\right\} \ (x\in[0,\infty)^{I})$ and $b(x)\equiv\prod_{i}\max\left\{ 0,\frac{\min\Omega^{(i)}-x^{(i)}}{\min\Omega^{(i)}}\right\} \ (x\in[0,\infty)^{I})$.

Then, 

(a).$\overline{V}$ is continuous, and

(b).$\overline{V}$ satisfies $\overline{V}\left(x_{1},\cdots,x_{J^{\prime}},\underbrace{0_{I},\cdots,0_{I}}_{J-J^{\prime}}\right)=V\left(\left\{ x_{j}\right\} _{j=1,\cdots,J^{\prime}}\right)\ \ \ (x_{j}\in\Omega;\ j=1,\cdots,J^{\prime}).$
\end{lem}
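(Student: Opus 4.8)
The plan is to prove the two claims separately: (b) by directly evaluating the defining sum at the prescribed point, and (a) by a reindexing that replaces the point-dependent summation range with a fixed one. Throughout I would record a few elementary facts about the weights, all consequences of the standing hypothesis $\min\Omega>0_{I}$, which makes each denominator $\min\Omega^{(i)}$ strictly positive: the maps $a$ and $b$ are continuous on $[0,\infty)^{I}$; for $x\in\Omega$ every coordinate satisfies $x^{(i)}\ge\min\Omega^{(i)}$, so $a(x)=1$ and $b(x)=0$; at the origin $a(0_{I})=0$ and $b(0_{I})=1$; and for every $x\in\widehat{\Omega}$ the componentwise maximum $\max\{x,\min\Omega\}$ lies in $\Omega$ (it equals $x$ if $x\in\Omega$ and equals $\min\Omega\in\Omega$ if $x$ lies in the box $\prod_{i}[0,\min\Omega^{(i)}]$), so that every evaluation of $V$ appearing in the formula is legitimate.

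For (b) I would substitute $x_{J'+1}=\cdots=x_{J}=0_{I}$ with $x_{1},\dots,x_{J'}\in\Omega$. Since $0_{I}\notin\Omega$, the set $\{j:x_{j}\in\Omega\}$ is exactly $\{1,\dots,J'\}$, so the admissible $\widetilde{\mathcal{J}}$ are those with $\{1,\dots,J'\}\subset\widetilde{\mathcal{J}}\subset\{1,\dots,J\}$. If such a $\widetilde{\mathcal{J}}$ contains any index $j>J'$, its product carries the factor $a(x_{j})=a(0_{I})=0$, so the term vanishes; hence only $\widetilde{\mathcal{J}}=\{1,\dots,J'\}$ survives. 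For this term the weight is $\prod_{j\le J'}a(x_{j})\cdot\prod_{j>J'}b(0_{I})=1$, and $V\!\left(\{\max\{x_{j},\min\Omega\}\}_{j\le J'}\right)=V(x_{1},\dots,x_{J'})$ because $\max\{x_{j},\min\Omega\}=x_{j}$ for $x_{j}\in\Omega$. This yields $\overline{V}(x_{1},\dots,x_{J'},0_{I},\dots,0_{I})=V(x_{1},\dots,x_{J'})$.

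For (a) the main obstacle is that the admissible range $\{\widetilde{\mathcal{J}}:\{j:x_{j}\in\Omega\}\subset\widetilde{\mathcal{J}}\}$ depends on the point and jumps as a column moves between $\Omega$ and the box, so a term-by-term continuity argument does not apply directly. The key step is to observe that the sum may be \emph{extended} to all subsets $\widetilde{\mathcal{J}}\subset\{1,\dots,J\}$ without altering its value: any $\widetilde{\mathcal{J}}$ that omits an index $j_{0}$ with $x_{j_{0}}\in\Omega$ contributes the factor $b(x_{j_{0}})=0$, so the added terms are identically zero. After this extension the index family is the fixed finite set $2^{\{1,\dots,J\}}$ and
\[
\overline{V}(x_{1},\dots,x_{J})=\sum_{\widetilde{\mathcal{J}}\subset\{1,\dots,J\}}\Big[\prod_{j\in\widetilde{\mathcal{J}}}a(x_{j})\Big]\Big[\prod_{j\notin\widetilde{\mathcal{J}}}b(x_{j})\Big]\,V\!\left(\{\max\{x_{j},\min\Omega\}\}_{j\in\widetilde{\mathcal{J}}}\right)
\]
is a finite sum of products of continuous functions: $a$ and $b$ are continuous, $x_{j}\mapsto\max\{x_{j},\min\Omega\}$ is continuous into $\Omega$, and each $V|_{\Omega^{|\widetilde{\mathcal{J}}|}}$ is continuous by hypothesis (and permutation invariant, so the ordering of the selected columns is immaterial). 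Hence $\overline{V}$ is continuous on the set of all matrices with columns in $\widehat{\Omega}$, and therefore on the sorted representatives constituting $\mathcal{M}(I,J,\widehat{\Omega})$.

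One bookkeeping point I would settle explicitly is the term $\widetilde{\mathcal{J}}=\emptyset$, which arises when all columns lie in the box and calls for a value of $V$ on the empty tuple; assigning it any fixed constant leaves both arguments intact, since this term enters only through the continuous factor $\prod_{j}b(x_{j})$ and never appears in the computation for (b).
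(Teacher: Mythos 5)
Your proof of (b) is exactly the paper's: substitute the zero columns, note $a(0_{I})=0$ kills every $\widetilde{\mathcal{J}}$ containing an index beyond $J'$, and check that the surviving term has weight one. For (a), however, you take a genuinely different route. The paper argues locally: it reduces continuity to the single one-sided limit $\lim_{x_{J}\uparrow\min\Omega}\overline{V}(x_{1},\dots,x_{J-1},x_{J})=\overline{V}(x_{1},\dots,x_{J-1},\min\Omega)$, using $b(\min\Omega)=0$ to show the terms that newly enter the summation range at the seam carry vanishing weight (the reduction to this one limit, and to this one coordinate, is left implicit, resting on permutation invariance and on the observation that away from the seam the summation range is locally constant). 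You instead globalize the same vanishing-weight fact: since $b(x)=0$ for every $x\in\Omega$, the sum may be extended to \emph{all} subsets $\widetilde{\mathcal{J}}\subset\{1,\dots,J\}$ without changing its value, after which $\overline{V}$ is a finite sum over a point-independent index family of products of continuous functions ($a$, $b$, the componentwise max into $\Omega$, and the restrictions $V|_{\Omega^{|\widetilde{\mathcal{J}}|}}$), so continuity is immediate everywhere on $\widehat{\Omega}^{J}$ and hence on the sorted representatives. Both arguments hinge on the same two identities $a(x)=1,\ b(x)=0$ on $\Omega$, but your reindexing removes the need to identify and separately treat the seam, making the continuity claim self-contained; the paper's version is shorter but leaves more to the reader. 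Your remark on the $\widetilde{\mathcal{J}}=\emptyset$ term (which requires a convention for $V$ on the empty tuple and is harmless because it enters only through the continuous weight $\prod_{j}b(x_{j})$ and never arises in (b)) is a legitimate bookkeeping point that the paper passes over silently.
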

\begin{proof}
\textbf{Proof of (a).}

It suffices to show $\lim_{x_{J}\uparrow m}\overline{V}(x_{1},\cdots,x_{J-1},x_{J})=\overline{V}(x_{1},\cdots,x_{J-1},x_{J}=\min\Omega)$.

Because $\lim_{x_{J}\uparrow m}b(x_{J})=b(m)=0$, the terms associated with $\widetilde{\mathcal{J}}$ such that $J\notin\widetilde{\mathcal{J}}$ disappear when taking the limit:

\begin{eqnarray*}
 &  & \lim_{x_{J}\uparrow m}\overline{V}(x_{1},\cdots,x_{J-1},x_{J}<m)\\
 & = & \lim_{x_{J}\uparrow m}\sum_{\left\{ j\in\{1,\cdots,J-1\}:x_{j}\in\Omega\right\} \subset\widetilde{\mathcal{J}}\subset\{1,\cdots,J\}}\left[\prod_{j\in\widetilde{\mathcal{J}}}a(x_{j})\right]\cdot\left[\prod_{j\in\{1,\cdots,J\}-\widetilde{\mathcal{J}}}b(x_{j})\right]\cdot V\left(\left\{ \max\{x_{j},\min\Omega\}\right\} _{j\in\mathcal{J}}\right)\\
 & = & \sum_{\left\{ j\in\{1,\cdots,J-1\}:x_{j}\in\Omega\right\} \cup\{J\}\subset\widetilde{\mathcal{J}}\subset\{1,\cdots,J\}}\left[\prod_{j\in\widetilde{\mathcal{J}}}a(x_{j})\right]\cdot\left[\prod_{j\in\{1,\cdots,J\}-\widetilde{\mathcal{J}}}b(x_{j})\right]\cdot V\left(\left\{ \max\{x_{j},\min\Omega\}\right\} _{j\in\mathcal{J}}\right)\\
 & = & \overline{V}(x_{1},\cdots,x_{J-1},x_{J}=\min\Omega).
\end{eqnarray*}

Hence, we obtain the statement.

\textbf{Proof of (b).}

Let $x_{1},\cdots,x_{J^{\prime}}\in\Omega$. Because $a(0_{I})=0,$ the terms associated with $\widetilde{\mathcal{J}}\neq\left\{ j\in\{1,\cdots,J\}:x_{j}\in\Omega\right\} $, i.e. $\widetilde{\mathcal{J}}\neq\{1,\cdots,J^{\prime}\}$ disappear, and

\begin{eqnarray*}
 &  & \overline{V}(x_{1},\cdots,x_{J^{\prime}},0_{I}\cdots,0_{I})\\
 & = & \left[\prod_{j\in\{1,\cdots,J^{\prime}\}}a(x_{j})\right]\cdot\left[\prod_{j\in\{1,\cdots,J\}-\{1,\cdots,J^{\prime}\}}b(x_{j})\right]\cdot V\left(\left\{ \max\{x_{j},\min\Omega\}\right\} _{j\in\mathcal{J}}\right)\\
 & = & \left[\prod_{j\in\mathcal{J}}1\right]\cdot\left[\prod_{j\in\{1,\cdots,J\}-\{1,\cdots,J^{\prime}\}}1\right]\cdot V\left(\left\{ x_{j}\right\} _{j=1,\cdots,J^{\prime}}\right)\ \ \ \left(\because\ a(x_{j})=1\ \text{for}\ x_{j}\in\Omega,\ b(0_{I})=1\right)\\
 & = & V\left(\left\{ x_{j}\right\} _{j=1,\cdots,J^{\prime}}\right).
\end{eqnarray*}
\end{proof}

\subsubsection{Proof of Proposition \ref{prop:variable_size}}
\begin{proof}
First, by Tierze extension theorem, we can take a function such that $V:\cup_{J^{\prime}=1}^{J}\left[\prod_{i=1}^{I}\left[\min\Omega_{i},\max\Omega_{i}\right]\right]^{J^{\prime}}\rightarrow\mathbb{R}$. If $\min\Omega>0$, by Lemma \ref{lem:function_extension}, we can construct a continuous function $\overline{V}:\mathcal{M}(I,J;\widetilde{\Omega})\ (\subset\mathbb{R}^{I\times J})\rightarrow\mathbb{R}$ such that $\overline{V}\left(x_{1},\cdots,x_{J^{\prime}},\underbrace{0_{I},\cdots,0_{I}}_{J-J^{\prime}}\right)=V\left(\left\{ x_{j}\right\} _{j=1,\cdots,J^{\prime}}\right)$, for the function $V:\Omega^{\leq J}\rightarrow\mathbb{R}$. Let $\overline{V}=V$ if condition (b) holds.

By Corollary \ref{cor:chen_et_al}, there exists a continuous function and $\widetilde{\Lambda}:\left\{ \sum_{j=1}^{J}\eta_{I,J}(x_{j})|x_{j}\in\widetilde{\Omega}\subset\mathbb{R}^{I}\right\} \rightarrow Im(\Lambda)\ (\subset\mathbb{R}^{J})$ such that $\left\{ x_{1},\cdots,x_{J^{\prime}},\underbrace{0_{I},\cdots,0_{I}}_{J-J^{\prime}}\right\} =\widetilde{\Lambda}\left(\sum_{j=1}^{J^{\prime}}\eta_{I,J}(x_{j})\right)$. Hence, by defining $\psi\equiv\overline{V}\circ\widetilde{\Lambda}$, we obtain:

\begin{eqnarray*}
V\left(\left\{ x_{1},\cdots,x_{J^{\prime}}\right\} \right) & = & \overline{V}\left(x_{1},\cdots,x_{J^{\prime}},\underbrace{0_{I},\cdots,0_{I}}_{J-J^{\prime}}\right)\\
 & = & \overline{V}\left(\widetilde{\Lambda}\left(\sum_{j=1}^{J^{\prime}}\eta_{I,J}(x_{j})\right)\right)=\psi\left(\sum_{j=1}^{J^{\prime}}\eta_{I,J}(x_{j})\right).
\end{eqnarray*}
\end{proof}

\subsubsection{Proof of Proposition \ref{prop:no_nest_case}}
\begin{proof}
As in the proof of Proposition \ref{prop:variable_size}, we can construct a continuous function $\overline{V}:\Omega\times\mathcal{M}(I,J-1;\widetilde{\Omega})\times\Upsilon\rightarrow\mathbb{R}$ such that $\overline{V}\left(x_{j},\left\{ x_{k}\right\} _{k\in\mathcal{J}-\{j\}},y\right)=V\left(x_{j},\left\{ x_{k}\right\} _{k\in\mathcal{J}-\{j\}\ s.t.\ x_{k}\neq0},y\right)$, for the function $V:\Omega\times\Omega^{\leq J-1}\times\Upsilon\rightarrow\mathbb{R}$. Hence, 

\begin{eqnarray*}
V\left(x_{j},\left\{ x_{j}\right\} _{k\in\mathcal{J}-\{j\}},y\right) & = & \overline{V}\left(x_{j},\left(\left\{ x_{k}\right\} _{k\in\mathcal{J}-\{j\}},\underbrace{0,\cdots,0}_{J-|\mathcal{J}|}\right),y\right)\\
 & = & \overline{V}\left(x_{j},\exists\widetilde{\Lambda}\left(\sum_{k\in\mathcal{J}-\{j\}}\eta_{I,J-1}(x_{k})\right),y\right)\ \left(\because\text{Corollary }\ref{cor:chen_et_al}\right)\\
 & = & \exists\psi\left(x_{j},\sum_{k\in\mathcal{J}-\{j\}}\eta_{I,J-1}(x_{k}),y\right).
\end{eqnarray*}
\end{proof}

\subsection{Proof of Proposition \ref{prop:aggregative_game}}
\begin{proof}
By Proposition \ref{prop:no_nest_case}, we can take functions $\widetilde{g_{j}}:\mathcal{A}_{j}\times\mathbb{R}^{K}\rightarrow\mathbb{R}$ and $\widetilde{h_{j}}:\mathcal{A}_{j}\times\{x_{j}^{0}\}\rightarrow\mathbb{R}^{K}$ such that $\widetilde{\widetilde{g_{j}}}\left(a_{j},\left\{ a_{j^{\prime}},x_{j^{\prime}}^{0}\right\} _{j^{\prime}\in\mathcal{J}-\{j\}}\right)=\widetilde{g_{j}}\left(a_{j},\sum_{j^{\prime}\in\mathcal{J}-\{j\}}\widetilde{h_{j^{\prime}}}(a_{j^{\prime}},x_{j^{\prime}}^{0})\right)$. By defining $h_{j^{\prime}}:\mathcal{A}_{j}\rightarrow\mathbb{R}^{K}$ such that $h_{j^{\prime}}\equiv\widetilde{h_{j^{\prime}}}\left(a_{j^{\prime}},x_{j^{\prime}}^{0}\right)$, we obtain the statement.
\end{proof}

\subsection{Proof of Proposition \ref{prop:nest_case}}
\begin{proof}
First, for a $I\times J\times F$ dimensional array $A,$ let $A_{f}\equiv A[:,:,f]\subset\mathbb{R}^{I\times J}$ and $a_{jf}\equiv A[:,j,f]$. We further define $A_{-f}\subset\mathbb{R}^{I\times J\times(F-1)}$ which corresponds to $A$ skipping $A_{f}$. Analogous to the case of two dimensional matrices, let $\mathcal{M}(I,J,F;W)\subset M(I,J,F:\mathbb{R})$ be the set of $I\times J\times F$ dimensional arrays whose column vectors $(A[:,j,f]\ j=1,\cdots,J,\ f=1,\cdots,F)$ are in $W\subset\mathbb{R}^{I}$, and they are sorted based on the descending lexicographical order.

Then, by Corollary \ref{cor:chen_et_al}, there exists a unique homeomorphism $\Psi_{1}:\left\{ \sum_{j=1}^{J}\eta_{I,J}(a_{jf})|a_{jf}\in\widetilde{\Omega}\subset\mathbb{R}^{I}\right\} \rightarrow Im(\Psi_{1})\ (\subset\mathbb{R}^{I\times J})$ such that:

\begin{equation}
A_{f}=\Psi_{1}\left(\sum_{j=1}^{J}\eta_{I,J}(a_{jf})\right)\ \forall f,A_{f}\in\mathcal{M}(I,J;\widetilde{\Omega}).\label{eq:A_f-1}
\end{equation}
Furthermore, there exists a unique homeomorphism $\Psi_{2}:\left\{ \sum_{f^{\prime}\in\{1,\cdots,F\}-\{f\}}\eta_{IJ,F-1}(A_{f^{\prime}})|A_{f^{\prime}}\in\widetilde{\Omega}^{J}\subset\mathbb{R}^{I\times J}\right\} \rightarrow Im(\Psi_{2})\ (\subset\mathbb{R}^{I\times J\times(F-1)})$ such that:

\begin{equation}
A_{-f}=\Psi_{2}\left(\sum_{f^{\prime}\in\{1,\cdots,F\}-\{f\}}\eta_{IJ,F-1}(A_{f^{\prime}})\right)\ \forall A_{-f}\in\mathcal{M}(I,J,F-1;\widetilde{\Omega}).\label{eq:A_minus_f-1}
\end{equation}

As in the case of Proposition \ref{prop:variable_size}, we can construct a continuous permutation invariant function $\overline{V}:\mathcal{M}(I,J;\widetilde{\Omega})\times\mathcal{M}(I,J,F-1;\widetilde{\Omega})\times\Upsilon\rightarrow\mathbb{R}$ such that $\overline{V}\left(A_{f},A_{-f},y\right)=V\left(\{x_{j}\}_{j\in\mathcal{J}_{f}},\left\{ \{x_{j}\}_{j\in\mathcal{J}_{\widetilde{f}}}\right\} _{\widetilde{f}\in\mathcal{F-}\{f\}},y\right)$ for the function $V:\Omega^{\leq J}\times\left(\Omega^{\leq J}\right)^{\leq(F-1)}\times\Upsilon\rightarrow\mathbb{R}$. Here, $A_{f}$ is a matrix where $\left\{ x_{j\in\mathcal{J}_{f}},\underbrace{0_{I},\cdots,0_{I}}_{J-|\mathcal{J}|}\right\} $ are sorted based on the descending lexicographical order, and $A_{-f}$ is a matrix where $\left\{ A_{\widetilde{f}\in\mathcal{F}-\{f\}},\underbrace{0_{I\times J},\cdots,0_{I\times J}}_{F-|\mathcal{F}|}\right\} $ are sorted based on the descending lexicographical order.

Using equations (\ref{eq:A_f-1}) and (\ref{eq:A_minus_f-1}), we obtain:

\begin{eqnarray*}
 &  & \overline{V}\left(A_{f},A_{-f},y\right)\\
 & = & \overline{V}\left(\Psi_{1}\left(\sum_{j=1}^{J}\eta_{I,J}(a_{jf})\right),\Psi_{2}\left(\sum_{f^{\prime}\in\{1,\cdots,F\}-\{f\}}\eta_{IJ,F-1}\left(\Psi_{1}\left(\sum_{j=1}^{J}\eta_{I,J}(a_{jf^{\prime}})\right)\right)\right),y\right)\\
 & = & \exists\psi_{2}\left(\sum_{j=1}^{J}\eta_{I,J}(a_{jf}),\sum_{f^{\prime}\in\{1,\cdots,F\}-\{f\}}\exists\psi_{1}\left(\sum_{j=1}^{J}\eta_{I,J}(a_{jf^{\prime}})\right),y\right).
\end{eqnarray*}

Hence, by the relation between $V$ and $\overline{V}$, we obtain the statement.
\end{proof}
\bibliographystyle{apalike}
\bibliography{literature_symmetry}

\end{document}